\def\Rc{\mathbb{R}}
\def\Cc{\mathbb{C}}
\newtheorem{thm}{Theorem}
\newtheorem{cor}{Corollary}
\newtheorem{remark}{Remark}
\begin{document}

\title{On decoding of digital data sent over a noisy MIMO channel
}
\author{Andrei Osipov\footnote{Yale University, 
51 Prospect st, New Haven, CT 06511.
Email: andrei.osipov@yale.edu.
}}
\maketitle

\begin{abstract}
The transmission of digital data is one of the principal
tasks in modern wireless communication.
Classically, the communication channel consists of 
one transmitter and one receiver; however,
due to the constantly increasing demand in higher transmission rates,
the popularity of using
several receivers and transmitters 
has been rapidly growing.

In this paper, we 
combine a number of fairly standard techniques
from numerical linear algebra and probability
to
develop several (apparently novel)
randomized schemes for
the decoding of digital messages sent over a noisy
multivariate Gaussian channel.

We use a popular mathematical model for such channels
to illustrate the performance of our schemes via 
numerical experiments.

\end{abstract}

\noindent
{\bf Keywords:} {MIMO, decoding schemes, noisy Gaussian channels}


\noindent
{\bf Math subject classification:} 
{65C99, 90C27, 94A12, 94B35}


\section{Introduction}
\label{sec_intro}

The importance of wireless communication can hardly be overestimated;
our life nowadays is unimaginable without it.
For obvious reasons, 
the transmission of digital (as opposed to analog) data
is one of the principal tasks in wireless communication.
In the case
when there is one transmitter and one receiver,
the subject has been extensively studied for decades.
More recently, however, 
as more and more data are collected
and
need to be rapidly transmitted,
there has been an increasing demand in
ways
to boost communication performance;
as a result, much research is being done on the
use of multiple receivers/transmitters.
This topic (generally referred to as
multiple-input and multiple-output, or MIMO) 
is now at the frontier of the research
in modern communications
(see e.g. \cite{Alamouti}, 
\cite{Erez}, 
\cite{Foschini}, 
\cite{Hochwald}, 
\cite{Jalden}, 
\cite{Koshy}, 
\cite{Koshy2}, 
\cite{Romero}, 
\cite{Telatar}).  

The transmission of digital data over a MIMO channel is typically
described by the following model
(see e.g. \cite{Hochwald}, \cite{Telatar}).
Suppose that $m>0$ is an integer power of 2, and that 
$C$ is a collection of $m$ points
in the complex plane
($C$ is usually referred to as a "constellation"); each element
of $C$ corresponds to a unique binary word of length $\log_2(m)$.

Suppose also that 
there are $n$ transmitters and $n$ receivers,
and that $X = C^m$ is the collection of all possible
vectors in $\Cc^m$ whose coordinates belong to $C$ 
(obviously, $X$ contains $m^n$ vectors). Suppose, in addition,
that $\sigma>0$ is a real number, and that $H$ is an $n$ by $n$
complex matrix (the "channel matrix"). For any $x$ in $X$,
we define the random $n-$dimensional complex vector $y(x)$
via the formula
\begin{align}
y(x) = H \cdot x + \sigma \cdot \left(z_1 + i \cdot z_2\right),
\label{eq_yx}
\end{align}
where $z_1,z_2$ are independent standard normal random vectors,
and $i=\sqrt{-1}$.
Here $x$ represents the transmitted message (that 
encodes $n \cdot \log_2(m)$ binary bits),
$H \cdot x$ represents the received message in the absence of noise,
and $y(x)$ represents the received message corrupted by Gaussian noise
of component-wise standard deviation $\sigma$.

In this model, the decoding problem can be formulated
as follows: suppose that $x_{\text{true}}$ is the
transmitted message, and that $y$ is the received message
(i.e. the observed value of $y(x_{\text{true}})$).
Under the assumption that $H$ and $\sigma$ are known,
one needs to find $x_{\text{best}}$ in $X$ such that
\begin{align}
\| H \cdot x_{\text{best}} - y \| \leq 
\| H \cdot x - y \|,
\label{eq_x_best_intro}
\end{align}
for any $x$ in $X$. In other words,
$x_{\text{best}}$ is the maximum likelihood estimate of
$x_{\text{true}}$.
 


In modern applications, any practical decoding algorithm must be quite rapid.
For instance, the peak speed requirement for 4G mobile telecommunication
technology is 1 Gigabit
per second, and
the 100 Gb/s RF Backbone DARPA project aims at 
the rate of 100 Gigabits per second (using
optical wireless communication).
To complicate the matters, the channel matrix $H$
needs to be frequently reevaluated (several times per second).

In principle, the decoding problem can be solved by brute force,
i.e. by iteratively testing all $x$ in $X$.
However, $X$ contains $m^n$ vectors, which is a large number
even for small values of $m$ and $n$ (e.g. $m=n=8$);
this makes the brute force approach impractical.

Another approach is 
based on the observation that, due to \eqref{eq_x_best},
$H \cdot x_{\text{best}}$ is the nearest neighbor of $y$
in the collection 
\begin{align}
HX = \left\{ H \cdot x \; : \; x \in X \right\}
\end{align}
of $m^n$ complex vectors in $\Cc^n$; thus, 
a fast nearest
neighbors algorithm might be used to find $x_{\text{best}}$.
Such algorithms typically pre-process $HX$ to obtain
auxiliary data structures; then, a single $y$ can be decoded
reasonably fast (see e.g. \cite{RannPnas}).
However, 
in a typical modern application $HX$ contains between $10^7$ and
$10^{10}$ vectors,
which makes 
the initial pre-processing of $HX$ unaffordable in terms of both
memory requirements and CPU time
(see e.g. \cite{RannPnas}),
especially considering that this calculation has to be redone
every time that $H$ changes.
In other words, even the fastest {\it generic} nearest neighbor search 
in $HX$ might not be fast enough.

Yet another
approach is based on the observation that, as opposed to $HX$,
the collection $X$ does not depend on $H$; moreover, it
has a special structure that allows for fast nearest neighbor
searches {\it in} $X$. Thus,
one can compute $\tilde{x}$ via the formula
\begin{align}
\tilde{x} = H^{-1} \cdot y,
\label{eq_tildex}
\end{align}
and look for $x_{\text{best}}$ among several nearest neighbors
of $\tilde{x}$ within $X$.
Unfortunately, this way the baby gets often thrown out with the bath water:
$x_{\text{best}}$ will typically
{\it not} be among any reasonable number of nearest neighbors of $\tilde{x}$
unless $\sigma$ is significantly smaller than the minimal
singular value $\lambda_n$ of $H$.
Suppose, for example, that $n=8$, that $C$
the so-called 8-PSK constellation of size $m=8$
(equispaced points on a unit circle), that $H$ is a randomly generated
complex Gaussian matrix, and that 
$\sigma = 0.75$
(i.e. roughly twice the expected minimal singular
value of such a matrix). 
Then, $x_{\text{best}}$ will coincide with $x_{\text{true}}$ in about
half of all cases; yet, in more than 70\% of these cases $x_{\text{best}}$
will not be among even as many as 500 nearest neighbors of $\tilde{x}$
in $X$.


Some other decoding schemes and approaches can be found, for example,
in
\cite{Alamouti}, 
\cite{Hochwald}, 
\cite{Jalden}, 
\cite{Koshy}, 
\cite{Koshy2} 
(see also references therein).

In this project, we develop several randomized decoding schemes. 
Our schemes are based on the observation that,
in typical applications, the channel matrix $H$ is
not large (e.g. $n=8$ or $n=16$),
and thus one can afford to evaluate its singular value decomposition (SVD).
Then, we use the SVD of $H$ to find $x_{\text{best}}$ by
a rapid randomized test-and-trial procedure.
Our tentative algorithms
demonstrate
reasonably good performance in several test cases
(see Section~\ref{sec_numerical}),
even when $\sigma = O(\lambda_n)$.
The schemes
break down only when $\sigma$ is so large that
the estimate
$x_{\text{best}}$ defined via \eqref{eq_x_best}
differs from 
the transmitted message $x_{\text{true}}$ with high probability
(e.g. it is impossible to recover $x_{\text{true}}$ 
without additional assumptions).


This paper is organized as follows.
Section~\ref{sec_prel} contains mathematical preliminaries
to be used in the rest of the paper.
Section~\ref{sec_analytical} contains a number of related theoretical
results. In Section~\ref{sec_num_algo}, we
describe the decoding problem and present several
decoding schemes.
In Section~\ref{sec_numerical}, we illustrate the performance
of our decoding schemes via numerical experiments.
In Section~\ref{sec_future}, we present some conclusions
and outline possible directions of future research.

\section{Mathematical Preliminaries}
\label{sec_prel}
In this section, we introduce notation and summarize
several facts to be used in the rest of the paper.

\subsection{Probability}
\label{sec_probability}
In this section, we summarize some well known facts
from the probability theory. These facts
can be found, for example, in
\cite{Abramovitz},
\cite{Billingsley},
\cite{Feller},
\cite{Grimmett}.

Suppose that $x > 0$ is a real number. 
In agreement with the standard practice, we define
the gamma function by the formula
\begin{align}
\Gamma(x) = \int_0^{\infty} t^{x-1} e^{-t} \; dt.
\label{eq_def_gamma}
\end{align}

The one-dimensional standard Gaussian distribution
$N(0,1)$ with mean zero and standard deviation one
is defined by its probability density function (pdf)
\begin{align}
f_{N(0,1)}(t) = \frac{ 1 }{ \sqrt{2\pi} } e^{-t^2/2}, 
\quad -\infty < t < \infty.
\label{eq_def_normal_pdf}
\end{align}
Its cumulative distribution function (cdf) is given by the formula
\begin{align}
\Phi(x) = \frac{1}{\sqrt{2\pi}} \int_{-\infty}^{x} e^{-t^2/2} \; dt
        = 1 - \frac{1}{2} \cdot \text{erfc}\left(\frac{x}{\sqrt{2}}\right),
\label{eq_def_normal_cdf}
\end{align}
where $\text{erfc} : \Rc \to \Rc$ is the 
complementary error function
defined via the formula
\begin{align}
\text{erfc}(x) = \frac{2}{\sqrt{\pi}} \int_x^{\infty} e^{-t^2} \; dt,
\label{eq_def_erfc}
\end{align}
for all real $x$.

Suppose that ${d} > 0$ is a positive integer. We say that the random
vector $X$ has standard Gaussian $d$-dimensional distribution
$N(0_d, I_d)$,
if all of its coordinates are independent standard Gaussian random variables.

Suppose now that $X \sim N(0_d, I_d)$.
Then $\| X \|^2$ has distribution\footnote{
Chi-square with $d$ degrees of freedom.}
$\chi^2(d)$
with pdf
\begin{align}
f_{\chi^2(d)}(t) = 
\frac{ t^{d/2-1} \cdot e^{-t/2} }{ 2^{d/2} \cdot \Gamma(d/2) }, \quad
t > 0,
\label{eq_def_chi2_pdf}
\end{align}
where $\Gamma$ denotes the gamma function defined via
\eqref{eq_def_gamma} above.
In particular, if $n>0$ is a positive integer, then
\begin{align}
f_{\chi^2(2n)}(t) = 
\frac{ t^{n-1} \cdot e^{-t/2} }{ 2^n \cdot (n-1)! },
\label{eq_chi2_pdf_even}
\end{align}
for all $t > 0$, and the corresponding cdf $F_{\chi^2(2n)}$ 
admits the form
\begin{align}
F_{\chi^2(2n)}(x) = 
1 - e^{-x/2} \cdot \sum_{k=0}^{n-1} \frac{ x^k }{ 2^k \cdot k!},
\label{eq_chi2_cdf_even}
\end{align}
for all $x>0$.

The real-valued Gaussian random variable has a straightforward
generalization to the complex plane. Suppose that $X, Y \sim N(0,1)$
are i.i.d. standard normal variables. We say that the random
variable $Z$ defined via the formula
\begin{align}
Z = X + i \cdot Y
\label{eq_ncomplex}
\end{align}
is a standard normal complex variable, and we denote its distribution by
$N\Cc(0,1)$.\footnote{in some sources, the same
distribution might be denoted $N\Cc(0,2)$ or $N\Cc(0,2;0)$.}
Similarly, if $d>0$ is a positive integer, we say that the complex
$d-$dimensional random vector $Z_d$ has complex standard 
distribution $N\Cc(0_d,I_d)$
if all of its coordinates are i.i.d. standard normal complex variables.
Needless to say, in this case
\begin{align}
\| Z_d \|^2 \sim \chi^2( 2 \cdot d).
\label{eq_complex_chi2}
\end{align}

The following well known theorem states the fact that
Gaussian distribution is invariant under orthogonal transformations.
\begin{thm}
Suppose that $d>0$ is a positive integer, that $U$ is the $d$ by $d$
unitary matrix, and that the complex $d-$dimensional random vector
$Z_d$ has distribution $N\Cc(0_d,I_d)$. Then,
\begin{align}
U \cdot Z_d \sim N\Cc(0_d,I_d).
\label{eq_invariant}
\end{align}
\label{thm_invariant}
\end{thm}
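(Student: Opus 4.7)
The plan is to reduce the complex statement to the well-known real analog: that the distribution $N(0_{2d}, I_{2d})$ on $\Rc^{2d}$ is invariant under all orthogonal transformations. First I would split everything into real and imaginary parts. Write $U = A + i B$ with $A,B \in \Rc^{d\times d}$, and $Z_d = X + i Y$ with $X,Y \sim N(0_d,I_d)$ independent real Gaussian vectors, so that the stacked vector $(X^T,Y^T)^T$ has distribution $N(0_{2d}, I_{2d})$ on $\Rc^{2d}$. A direct expansion of the complex product yields
\begin{align*}
U \cdot Z_d \;=\; (AX - BY) + i\,(AY + BX),
\end{align*}
whose real and imaginary parts are precisely the two $d$-dimensional blocks of
\begin{align*}
M \cdot \begin{pmatrix} X \\ Y \end{pmatrix},
\qquad
M \;=\; \begin{pmatrix} A & -B \\ B & A \end{pmatrix} \in \Rc^{2d\times 2d}.
\end{align*}

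Next I would check that $M$ is orthogonal. Expanding $U^{*}U = I_d$ with $U^{*} = A^T - i B^T$ and separating real and imaginary parts produces the two real identities $A^T A + B^T B = I_d$ and $A^T B = B^T A$; a short block multiplication using these identities then shows $M^T M = I_{2d}$. By invariance of the real standard Gaussian on $\Rc^{2d}$ under orthogonal transformations, the vector $M \cdot (X^T,Y^T)^T$ is again distributed as $N(0_{2d},I_{2d})$. Splitting it back into its top and bottom halves and recombining them as real and imaginary parts, $U Z_d$ takes the form $X' + i Y'$ with $X',Y' \sim N(0_d,I_d)$ independent, which is exactly the definition of $N\Cc(0_d,I_d)$.

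The main obstacle, such as it is, is purely algebraic: the bookkeeping required to translate the single complex identity $U^{*}U = I_d$ into the single real identity $M^T M = I_{2d}$ via the two intermediate relations on $A$ and $B$. Everything else is either routine expansion of matrix products or a direct appeal to the real-variable statement that the density proportional to $\exp(-\|\cdot\|^2/2)$ on $\Rc^{2d}$ is preserved under orthogonal linear changes of coordinates. An alternative, slightly slicker route avoids the block matrix altogether by computing the characteristic function of $U Z_d$ and using $\|U^{*} w\| = \|w\|$, but the block-matrix argument above makes the structural reason for the invariance completely transparent.
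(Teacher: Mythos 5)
Your proof is correct. Note that the paper itself offers no argument for Theorem~\ref{thm_invariant}: it is stated in Section~\ref{sec_probability} as a well-known fact, with the reader referred to standard probability texts, so there is no in-paper proof to compare against. Your reduction — writing $U = A + iB$, translating $U^{\ast}U = I_d$ into the real identities $A^TA + B^TB = I_d$ and $A^TB = B^TA$, verifying that the block matrix $M$ is orthogonal, and then invoking the orthogonal invariance of $N(0_{2d}, I_{2d})$ on $\Rc^{2d}$ — is the standard way to make the complex statement rigorous given the paper's definition of $N\Cc(0_d, I_d)$ as a vector of i.i.d.\ coordinates $X + iY$ with independent real and imaginary parts, and all the algebraic steps check out. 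The characteristic-function alternative you mention (using $\|U^{\ast}w\| = \|w\|$) would work just as well and is marginally shorter, but your block-matrix version has the advantage of exhibiting the complex unitary group as a subgroup of the real orthogonal group, which is exactly the structural reason the invariance holds.
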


The following theorem describes some additional properties
of complex normal random variables.
\begin{thm}
Suppose that $n>0$ is a positive integer, and that $a_1,\dots,a_n$
are complex numbers. Suppose also that $z_1,\dots,z_n$ are i.i.d.
standard normal complex random variables. Then,
\begin{align}
a_1 \cdot z_1 + \dots + a_n \cdot z_n \sim
N\Cc(0,1) \cdot \sqrt{ |a_1|^2 + \dots + |a_n|^2 }.
\label{eq_std}
\end{align}
\label{thm_std}
\end{thm}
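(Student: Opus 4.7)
The plan is to reduce this to the unitary invariance result (Theorem \ref{thm_invariant}) just proved. Set $r = \sqrt{|a_1|^2 + \dots + |a_n|^2}$. If $r = 0$, then every $a_k$ vanishes, the left-hand side is identically zero, and the claim holds trivially under the convention that the right-hand side is also the zero random variable. So we may assume $r > 0$.

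Observe that the vector $(a_1/r, \dots, a_n/r)$ has unit Euclidean norm in $\Cc^n$. By a standard linear-algebra fact (the Gram--Schmidt process applied to any basis containing this vector), we can extend it to an orthonormal basis of $\Cc^n$; equivalently, there exists an $n$ by $n$ unitary matrix $U$ whose first row is $(a_1/r, \dots, a_n/r)$. Let $Z = (z_1, \dots, z_n)^T$, so that by hypothesis $Z \sim N\Cc(0_n, I_n)$.

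Applying Theorem \ref{thm_invariant} to $U$ and $Z$, we conclude that $U \cdot Z \sim N\Cc(0_n, I_n)$. By the definition of $N\Cc(0_n, I_n)$, each coordinate of $U \cdot Z$ is itself a standard normal complex variable. In particular, the first coordinate
\begin{align}
(U \cdot Z)_1 = \frac{a_1 \cdot z_1 + \dots + a_n \cdot z_n}{r}
\end{align}
has distribution $N\Cc(0,1)$. Multiplying by the scalar $r$ yields
\begin{align}
a_1 \cdot z_1 + \dots + a_n \cdot z_n \sim N\Cc(0,1) \cdot r,
\end{align}
which is precisely \eqref{eq_std}.

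The main (mild) obstacle is the construction of the unitary extension $U$; this is entirely routine via Gram--Schmidt, so the proof is essentially an immediate corollary of Theorem \ref{thm_invariant} together with the definitional fact that marginals of $N\Cc(0_n, I_n)$ are $N\Cc(0,1)$.
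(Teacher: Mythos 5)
Your argument is correct: scaling by $r=\sqrt{|a_1|^2+\dots+|a_n|^2}$ (with the degenerate case $r=0$ dispatched separately), extending the unit vector $(a_1/r,\dots,a_n/r)$ to an orthonormal basis so that it forms the first row of a unitary matrix $U$, applying Theorem~\ref{thm_invariant} to conclude $U\cdot Z \sim N\Cc(0_n,I_n)$, and reading off the first coordinate is a complete and standard proof; the only ingredients beyond Theorem~\ref{thm_invariant} are the Gram--Schmidt extension and the definitional fact that each coordinate of $N\Cc(0_n,I_n)$ is $N\Cc(0,1)$, both of which you invoke correctly. There is, however, nothing in the paper to compare against: Theorem~\ref{thm_std} appears in the preliminaries as a well-known fact, stated without proof and attributed to standard probability references, just as Theorem~\ref{thm_invariant} itself is. So your derivation is best viewed as supplying one of the routine arguments the paper implicitly relies on. An equally common alternative, should you want one that does not pass through the unitary extension, is to write $a_k = \alpha_k + i\beta_k$ and $z_k = x_k + i y_k$, observe that the real and imaginary parts of $\sum_k a_k z_k$ are jointly Gaussian (being linear combinations of the i.i.d.\ real Gaussians $x_k, y_k$), and check directly that each has mean zero and variance $r^2$ while their covariance vanishes; this is more computational but uses only one-dimensional facts. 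Your route is cleaner given that Theorem~\ref{thm_invariant} is already available.
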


\subsection{Linear Algebra}
\label{sec_linear_algebra}

The following widely known theorem can be found,
in a more general form, in most
standard books on linear algebra
(see, for example, \cite{Golub}, \cite{Strang}).
\begin{thm}
Suppose that $n>0$ is a positive integer, and that $H$ is an $n$ by $n$
complex matrix. Then, there exist
non-negative real numbers $\sigma_1 \geq \dots \geq \sigma_n \geq 0$
and $n$ by $n$ unitary matrices $U$ and $V$ such that
\begin{align}
H = U \cdot \Sigma \cdot V^{\ast},
\label{eq_svd}
\end{align}
where $V^{\ast}$ denotes the conjugate transpose of $V$, and
$\Sigma$ is the diagonal $n$ by $n$ matrix whose diagonal entries
are defined via the formula
\begin{align}
\Sigma_{i,i} = \sigma_i,
\label{eq_svd_sigma}
\end{align}
for every $i=1,\dots,n$. The columns $u_1,\dots,u_n$ of $U$ are referred
to as left-singular vectors, the columns $v_1,\dots,v_n$ of $V$
are referred to as right-singular vectors, and $\sigma_1,\dots,\sigma_n$
are called the singular values of $H$. The factorization
\eqref{eq_svd} is typically referred to as the singular value
decomposition (SVD) of $H$.
\label{thm_svd}
\end{thm}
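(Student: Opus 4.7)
The plan is to derive the singular value decomposition from the spectral theorem for Hermitian matrices, which may be taken as known from the cited standard references (\cite{Golub}, \cite{Strang}). First, I would form the $n$ by $n$ matrix $H^{\ast} H$ and observe that it is Hermitian and positive semi-definite, since $x^{\ast} H^{\ast} H x = \| H x \|^2 \geq 0$ for every $x$ in $\Cc^n$. Invoking the spectral theorem, I would write
\begin{align}
H^{\ast} H = V \cdot \Lambda \cdot V^{\ast},
\end{align}
where $V$ is $n$ by $n$ unitary and $\Lambda$ is diagonal with real entries $\lambda_1 \geq \dots \geq \lambda_n \geq 0$.

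Next, I would define $\sigma_i = \sqrt{\lambda_i}$ for $i = 1,\dots,n$, denote by $v_1,\dots,v_n$ the columns of $V$, and for each index $i$ with $\sigma_i > 0$ set $u_i = H v_i / \sigma_i$. Orthonormality of these $u_i$ follows from the direct calculation
\begin{align}
u_i^{\ast} u_j = \frac{v_i^{\ast} H^{\ast} H v_j}{\sigma_i \sigma_j} = \frac{\lambda_j \cdot v_i^{\ast} v_j}{\sigma_i \sigma_j} = \delta_{ij}.
\end{align}
If $H$ is rank-deficient, I would extend this partial orthonormal set to a full orthonormal basis $u_1,\dots,u_n$ of $\Cc^n$ by Gram--Schmidt, and assemble $U$ from these vectors as columns. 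The diagonal matrix $\Sigma$ is then defined by \eqref{eq_svd_sigma}.

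Finally, I would verify the identity $H = U \cdot \Sigma \cdot V^{\ast}$ by checking that both sides agree on the orthonormal basis $\{v_1,\dots,v_n\}$: for each $j$, one has $U \Sigma V^{\ast} v_j = \sigma_j u_j$, which equals $H v_j$ when $\sigma_j > 0$ by the definition of $u_j$, and equals $0 = H v_j$ when $\sigma_j = 0$, since in the latter case $\| H v_j \|^2 = v_j^{\ast} H^{\ast} H v_j = \lambda_j = 0$. Since the two linear maps agree on a basis, they are equal.

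The only substantive ingredient is the spectral theorem for Hermitian matrices, which is standard. The main bookkeeping issue is the rank-deficient case, where the $u_i$ for $\sigma_i = 0$ are not determined by $H$ and must be supplied by an arbitrary orthonormal completion; stating this explicitly is the one step that is easy to gloss over but necessary for the theorem to hold in full generality.
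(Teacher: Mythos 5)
Your proof is correct: the reduction to the spectral theorem for the Hermitian positive semi-definite matrix $H^{\ast}H$, the definition $u_i = Hv_i/\sigma_i$ for the nonzero singular values, the orthonormal completion in the rank-deficient case, and the verification on the basis $v_1,\dots,v_n$ (using $\|Hv_j\|^2=\lambda_j=0$ when $\sigma_j=0$) together constitute a complete argument, and you correctly flag the completion step as the only delicate point. Note, however, that the paper does not prove this theorem at all --- it states it as a widely known fact and refers to standard texts (\cite{Golub}, \cite{Strang}); your derivation is essentially the standard one found in those references, so there is nothing to reconcile with the paper beyond that.
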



\section{Analytical Apparatus}
\label{sec_analytical}
The purpose of this section is to provide
the analytical apparatus to be used in the
rest of the paper.

\begin{thm}
Suppose that $n>0$ is a positive integer, that $H$ is
an $n$ by $n$ complex regular matrix, and that 
the matrices $U,\Sigma,V$ constitute the SVD of $H$,
as in \eqref{eq_svd} of Theorem~\ref{thm_svd}. Suppose, in addition,
that $w \sim N\Cc(0_d,I_d)$ is the standard normal complex random vector
in $\Cc^d$. Then,
\begin{align}
H^{-1} \cdot w \sim 
\frac{z_1}{\sigma_1} \cdot v_1 + \dots + \frac{z_n}{\sigma_n} \cdot v_n,
\label{eq_csn}
\end{align}
where the real numbers $\sigma_1,\dots,\sigma_n$ are the singular values
of $H$, the $n-$dimensional complex vectors $v_1,\dots,v_n$
are the columns of $V$, and $z_1,\dots,z_n$ are i.i.d. standard complex
normal random variables, as in \eqref{eq_ncomplex}.
\label{thm_csn}
\end{thm}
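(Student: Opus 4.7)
My plan is to unwind the SVD factorization, apply the rotational invariance of the complex Gaussian, and then read off the representation in the right-singular basis.

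First I would invert the SVD: since $U$ and $V$ are unitary, \eqref{eq_svd} gives $H^{-1} = V \Sigma^{-1} U^{\ast}$, where $\Sigma^{-1}$ is the diagonal matrix with entries $1/\sigma_i$ (which are well defined because $H$ is regular, so all $\sigma_i > 0$). Applying this to $w$, I get
\begin{align*}
H^{-1} \cdot w = V \cdot \Sigma^{-1} \cdot \brk{ U^{\ast} \cdot w }.
\end{align*}

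Next I would invoke Theorem~\ref{thm_invariant} with the unitary matrix $U^{\ast}$ to conclude that the random vector $z := U^{\ast} w$ is again distributed as $N\Cc(0_n,I_n)$. Writing $z = (z_1,\dots,z_n)^T$, the coordinates $z_1,\dots,z_n$ are i.i.d. standard complex normal random variables by definition. Multiplying by the diagonal matrix $\Sigma^{-1}$ simply rescales the coordinates, yielding the vector with entries $z_i/\sigma_i$.

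Finally, I would expand $V \cdot \Sigma^{-1} \cdot z$ in terms of the columns $v_1,\dots,v_n$ of $V$: since multiplication by $V$ sends the standard basis vector $e_i$ to $v_i$, I obtain
\begin{align*}
V \cdot \Sigma^{-1} \cdot z = \sum_{i=1}^n \frac{z_i}{\sigma_i} \cdot v_i,
\end{align*}
which is exactly \eqref{eq_csn}. There is no serious obstacle here; the only subtle point is making sure to apply Theorem~\ref{thm_invariant} to $U^{\ast}$ rather than to $U$ (both are unitary, so this is harmless) and noting that $H$ being regular guarantees that $\Sigma^{-1}$ exists.
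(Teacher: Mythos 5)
Your proposal is correct and follows the same route as the paper's own proof: invert the SVD to get $H^{-1} = V \cdot \Sigma^{-1} \cdot U^{\ast}$ and then apply the unitary invariance of Theorem~\ref{thm_invariant} to $U^{\ast} \cdot w$. You simply spell out the intermediate steps (the rescaling by $\Sigma^{-1}$ and the expansion in the columns of $V$) that the paper leaves implicit.
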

\begin{proof}
Due to \eqref{eq_svd} in Theorem~\ref{thm_svd},
\begin{align}
H^{-1} = V \cdot \Sigma^{-1} \cdot U^{\ast},
\label{eq_csn_1}
\end{align}
where $\Sigma^{-1}$ is the diagonal matrix whose diagonal entries are
$\sigma_1^{-1}, \dots, \sigma_n^{-1}$.
We combine \eqref{eq_invariant} in Theorem~\ref{thm_invariant} with
\eqref{eq_csn_1} to obtain \eqref{eq_csn}. 
\end{proof}
\begin{cor}
Suppose, in addition to the hypothesis of Theorem~\ref{thm_csn},
that $1 \leq i \leq n$ is an integer. Then,
\begin{align}
\left( H^{-1} \cdot w \right)(i) \sim
z_1 \cdot \sqrt{ \sum_{k=1}^{n-1} \frac{ |v_k(i)|^2 }{\sigma_k^2} }
+ z_n \cdot \frac{ |v_n(i)| }{ \sigma_n },
\label{eq_RaDe1}
\end{align}
where, for any complex vector $v$ in $\Cc^n$, we denote by $v(i)$
its i$th$ coordinate.
\label{cor_RaDe1}
\end{cor}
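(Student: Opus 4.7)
The plan is to start directly from Theorem~\ref{thm_csn} and take the $i$th coordinate of both sides of \eqref{eq_csn}. Since each $v_k$ is a fixed deterministic vector in $\Cc^n$, this operation commutes with the random linear combination, yielding
\begin{align*}
(H^{-1} \cdot w)(i) \sim \sum_{k=1}^{n} \frac{v_k(i)}{\sigma_k} \cdot z_k,
\end{align*}
where $z_1,\dots,z_n$ are i.i.d. standard complex normal. From here the task is purely distributional: rewrite the right-hand side into the two-term form \eqref{eq_RaDe1}.

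Next, I would split the sum into the contribution of the first $n-1$ indices and the contribution of index $n$, and handle them separately. Because $z_1,\dots,z_{n-1}$ are i.i.d. $N\Cc(0,1)$, Theorem~\ref{thm_std} (applied with coefficients $a_k = v_k(i)/\sigma_k$ for $k=1,\dots,n-1$) collapses the first partial sum into a single scalar multiple of a standard complex normal, namely
\begin{align*}
\sum_{k=1}^{n-1} \frac{v_k(i)}{\sigma_k} \cdot z_k \;\sim\; N\Cc(0,1) \cdot \sqrt{\sum_{k=1}^{n-1} \frac{|v_k(i)|^2}{\sigma_k^2}}.
\end{align*}
For the remaining term, I would observe that $v_n(i)/\sigma_n$ is a fixed complex scalar, so by the rotational invariance of $N\Cc(0,1)$ (a special case of Theorem~\ref{thm_std} with $n=1$), the random variable $(v_n(i)/\sigma_n) \cdot z_n$ has the same distribution as $(|v_n(i)|/\sigma_n) \cdot z_n$.

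Finally, I would note that the two pieces are independent, since they are built from disjoint subsets of the independent variables $\{z_1,\dots,z_n\}$; after the collapsing step, the relabeled standard complex normal coming from the first sum is also independent of $z_n$. Renaming the first standard normal as $z_1$ (to match the notation of the statement) produces exactly the expression on the right-hand side of \eqref{eq_RaDe1}. I do not anticipate a genuine obstacle: the only subtlety worth flagging is that the two summands in \eqref{eq_RaDe1} involve independent complex normals $z_1$ and $z_n$, so one must be careful to treat the relabeling as a statement about joint distributions rather than as an identity of random variables.
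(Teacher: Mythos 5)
Your proposal is correct and takes essentially the same route as the paper: the paper's proof is a one-line combination of Theorem~\ref{thm_std} with the SVD representation of $H^{-1}\cdot w$ from Theorem~\ref{thm_csn}, and your argument simply fills in the details (taking the $i$th coordinate, splitting off the $k=n$ term, collapsing the first $n-1$ terms via Theorem~\ref{thm_std}, and using rotational invariance for the last term). The independence caveat you flag is exactly the right point to be careful about, and your treatment of it is sound.
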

\begin{proof}
We combine Theorem~\ref{thm_std} with \eqref{eq_csn_1} to obtain
\eqref{eq_RaDe1}.
\end{proof}

\section{Numerical Algorithms}
\label{sec_num_algo}
In this section, we describe several numerical algorithms
for the decoding of a digital signal sent over a noisy MIMO channel,
in the sense described in Section~\ref{sec_task} below.

\subsection{Principal Decoding Task}
\label{sec_task}
In this section, we provide a formal description
of the principal decoding task
used in the rest of this paper
(see e.g. 
\cite{Alamouti}, 
\cite{Hochwald}, 
\cite{Jalden}, 
\cite{Koshy}, 
\cite{Koshy2}). 

Suppose that $n>0$ and $m>0$ are integers,
that 
\begin{align}
C = \left\{ c_1, \dots, c_m \right\}
\label{eq_c_gen}
\end{align}
is a collection of $m$ points in the complex plane
(often referred to as "constellation" in the literature),
that $H$ is an $n$ by $n$ complex matrix, and that $\sigma>0$
is a real number. Suppose also that $X$ is the collection
of $m^n$ vectors in $\Cc^n$ defined via the formula
\begin{align}
X = \left\{ \left(x(1),\dots,x(n)\right)^T \; : \; x(1),\dots,x(n) \in C
   \right\},
\label{eq_bigx_def}
\end{align}
and that $x_{\text{true}}$ is a vector in $X$ (we think of $x_{\text{true}}$
as of the unknown transmitted message). Suppose also that,
for every $x$ in $X$,  
the random complex $n-$dimensional vector $y(x)$ is defined via the formula
\begin{align}
y(x) = H \cdot x + \sigma \cdot Z_n,
\label{eq_y_def}
\end{align}
where $Z_n \sim N\Cc(0_d,I_d)$ is the standard normal $d-$dimensional
complex random vector
(see Section~\ref{sec_probability} above).

{\bf Task.} Suppose that $x_{\text{true}}$ in $X$ is the transmitted message,
and that $y_{\text{obs}}$ in $\Cc^n$ is the observed value of 
$y(x_{\text{true}})$. Given $X,H,\sigma$ and  $y_{\text{obs}}$,
one needs to find $x_{\text{true}}$. More specifically, 
one needs to find $x_{\text{best}}$ in $X$ such that
\begin{align}
\| H \cdot x_{\text{best}} - y_{\text{obs}} \| \leq 
\| H \cdot x - y_{\text{obs}} \|,
\label{eq_x_best}
\end{align}
for any $x$ in $X$. In other words,
$x_{\text{best}}$ is the maximum likelihood estimate of
$x_{\text{true}}$.
\begin{remark}
Obviously, for any fixed $H$ and any $x_{\text{true}}$ in $X$, 
the probability that
$x_{\text{best}}$ defined via \eqref{eq_x_best}
is equal to $x_{\text{true}}$ depends on $\sigma$
(more specifically, this probability decreases as $\sigma$ increases).
In particular, for noise of sufficiently large 
coordinate-wise standard deviation $\sigma$,
the best likelihood estimate is unlikely to coincide with
the transmitted message.
\label{rem_best}
\end{remark}
In the view of Remark~\ref{rem_best}, we make the following observations.

{\bf 1.} If $x_{\text{best}}$ defined via \eqref{eq_x_best} coincides
with $x_{\text{true}}$, a numerical scheme for decoding $y_{\text{obs}}$
should be able to recover $x_{\text{true}}$ (the task is well defined).

{\bf 2.} If $x_{\text{best}}$ defined via \eqref{eq_x_best} is different
from $x_{\text{true}}$, then there is not enough information for
recovering $x_{\text{true}}$ (without additional assumptions).
In that case, it might still be desirable for a decoding numerical scheme
to recover $x_{\text{best}}$.

\subsection{Running Example}
\label{sec_running}
For the sake of concreteness, we introduce the following example
(by specifying typical values of parameters of the model 
from Section~\ref{sec_task}),
to be used in the rest of this paper.

{\bf 1.} The dimensionality $n$ of the channel matrix 
(the number of transmitters and receivers):
\begin{align}
n = 6, 7, 8.
\label{eq_running_n}
\end{align}
While the value $n=8$ seems to be typical in many application,
the schemes should handle values up to $n=16$ or even $n=32$.

{\bf 2.} The number $m$ of constellation points:
\begin{align}
m = 8.
\label{eq_running_m}
\end{align}
While constellations of size $m=8$ are frequently used in applications,
many popular constellations contain $m=16$ or $m=32$ points, and these
values should be kept in mind.

{\bf 3.} The constellation $C$:
\begin{align}
C = \left\{ e^{i \cdot \pi \cdot k/4} \; : \; 0 \leq k < 8 \right\}
\label{eq_running_c}
\end{align}
(the so-called 8-PSK constellation). In other words, $C$ consists
of $m=8$ points equally distributed on the unit circle.

{\bf 4.} The channel matrix $H$: in agreement with a common practice,
we assume that the $n^2$ entries of $H$ have been independently
drawn from the standard normal complex distribution $N\Cc(0,1)$. 
In other words,
\begin{align}
H(i,j) \sim N\Cc(0,1),
\label{eq_running_h}
\end{align}
for every $i,j=1,\dots,n$.

In the rest of this section, we describe several decoding schemes.
Some of them are well known; others are apparently novel.
All of these schemes receive the observed value $y_{\text{obs}}$
of $y(x_{\text{true}})$ as an input and look for $x_{\text{best}}$
defined via \eqref{eq_x_best}.

\subsection{Brute force algorithm}
\label{sec_brute_force}

The brute force algorithm locates $x_{\text{best}}$ by
searching through all of $X$. In other words, it consists of 
the following steps:

{\bf Step 1.} For every $x$ in $X$, evaluate the real number
$d(x)$ via the formula
\begin{align}
d(x) = \| H \cdot x - y_{\text{obs}} \|.
\label{eq_dx_brute}
\end{align}

{\bf Step 2.} Define $x_{\text{best}}$ via finding the minimum
among all $d(x)$, i.e.
\begin{align}
d\left(x_{\text{best}}\right)  \leq d(x),
\label{eq_x_best_brute}
\end{align}
for all $x$ in $X$.
Obviously, \eqref{eq_x_best_brute} is equivalent to \eqref{eq_x_best}.

{\bf Memory requirements.} While the collection $X$ is typically quite large
(it contains $m^n$ vectors -- see \eqref{eq_bigx_def}),
all the vectors in $X$ can be iteratively computed one by one
in a straightforward way, thus obviating the need to pre-compute
$X$ and store it in memory (as far as the brute force algorithm
is concerned). Hence, the memory requirements of the brute force algorithm
are minimal in the sense that it requires only
\begin{align}
M_{\text{brute}} = O\left(n^2 + m\cdot n\right)
\label{eq_memory_brute}
\end{align}
memory words. In other words, practically speaking, the brute force
algorithm does not require any memory to speak of 
(as $m,n$ are typically very small: see Section~\ref{sec_running}).

{\bf Cost.} The cost of the brute force algorithm
is
\begin{align}
C_{\text{brute}} = O\left( |X| \cdot n^2 \right)
                 =  O\left( m^n \cdot n^2 \right)
\label{eq_cost_brute}
\end{align}
operations.

{\bf Success rate.} The brute force algorithm always locates 
$x_{\text{best}}$ and, in this sense, is optimal.

\begin{remark}
Even for as small values of $m$ and $n$ as $m=n=8$ (see 
Section~\ref{sec_running}), the brute force algorithm requires
about $8^{10} \approx 10^9$ operations. Since in a typical application
one needs to decode thousands or even millions messages per second,
its high cost deems the brute force decoding algorithm
impractical.
\label{rem_impractical} 
\end{remark}

{\bf Conclusion.} While the brute force algorithm is useless
in practical applications, it is a reasonable (albeit slow)
testing tool, due to its 100\% success rate.

We conclude by summarizing the principal input and output parameters
of the brute force algorithm described above.

{\bf Calling sequence:}
\begin{align}
\text{brute}(y_{\text{obs}}; x_{\text{best}}).
\label{eq_brute_calling}
\end{align}

{\bf Input parameters:}

-- $y_{\text{obs}}$ in $\Cc^n$: 

\quad the received message
(a noisy observation of $y(x_{\text{true}})$ defined via \eqref{eq_y_def}). \\

{\bf Output parameters:}

-- $x_{\text{best}}$ in $X$: 

\quad the maximal likelihood estimate of $x_{\text{true}}$
(see \eqref{eq_x_best}).

\subsection{Nearest Neighbors in $H \cdot X$}
\label{sec_nn_hx}
Obviously, the vector $H \cdot x_{\text{best}}$ 
(see \eqref{eq_x_best})
is simply the nearest neighbor of $y_{\text{obs}}$ within the collection
$HX$ of vectors in $\Cc^n$ defined via the formula
\begin{align}
H X = \left\{ H \cdot x \; : \; x \in X \right \}.
\label{eq_hx_def}
\end{align}
This observation suggests using a fast nearest neighbors algorithm
for computing $x_{\text{best}}$,
such as, for example, 
the randomized approximate nearest neighbors algorithm
(RANN) described in \cite{RannAcha}, \cite{RannPnas}
(needless to say, the straightforward nearest neighbors search
is simply the algorithm from Section~\ref{sec_brute_force}).
Obviously, a nearest neighbor search in $\Cc^n$ is
equivalent to that in $\Rc^{2n}$.

RANN consists of two steps briefly described below
(the reader is referred to \cite{RannAcha} for a more 
detailed description, and to \cite{Arya} for
another nearest neighbors algorithm):

{\bf Preprocessing (depends on $H$ but not on $y_{\text{obs}}$).}
One constructs a tree-like structure on $HX$, recursively subdividing
the points in $HX$ along each of $2n$ randomly chosen real axes.
The resulting "tree of boxes" consists of $2^{2n}$ boxes, each containing
$(m/4)^n$ points on average. Typically, several random subdivisions
are constructed (one can also further refine
each tree by subdividing boxes that contain too many points).

{\bf Query for nearest neighbors of $y_{\text{obs}}$.}
Once the auxiliary tree of boxes on $HX$ has been constructed, 
the search for nearest neighbors of any given $y_{\text{obs}}$
is done as follows. First, one locates the box that $y_{\text{obs}}$
belongs to; then, the points in this box (and, possibly, in
several boxes nearby) are inspected one by one.

{\bf Success rate.} The success rate depends on such parameters
as the number of trees used,
their internal structure (e.g. the
number of subdivisions in each tree, average number of points in a box), 
etc.
In particular,
there is a trade-off between probability
of locating
the nearest neighbor of $y_{\text{obs}}$ and required CPU time.
However, in this environment one can typically
achieve success rate of $90\%$ or higher 
without increasing CPU time too much; to that end, the parameters
of the nearest neighbor search are best tuned empirically.

{\bf Memory requirements.} Storing the data structures constructed
in the pre-processing step requires 
\begin{align}
M_{\text{preprocess}} = O\left( m^n \right)
\label{eq_memory_preprocess}
\end{align}
memory words per a single tree. In particular, even for such
small values of $m$ and $n$ as $m=n=8$ (see Section~\ref{sec_running}),
the memory requirements can easily exceed several Gigabytes.

{\bf Cost.} 
The cost of a single query for the nearest neighbor of $y_{\text{obs}}$
is proportional to the depth of the tree and the number
of points in a box; in particular, it will be of the order
\begin{align}
C_{\text{query}} = O\left( n^2 \cdot \log_2(m) \right)
\label{eq_cost_query}
\end{align}
(see e.g. \cite{RannAcha}). In other words, running time of a single
query is likely to be reasonable (compare to \eqref{eq_cost_brute},
and see also Section~\ref{sec_running}).
However,
the cost of the pre-processing step will be at least
\begin{align}
C_{\text{preproces}} = O\left( m^n \right)
\label{eq_cost_preprocess}
\end{align}
operations. While this step does not need to be redone for each
new $y_{\text{obs}}$, one still need to pre-process the data
from scratch every time the channel matrix $H$ changes.
In particular, in any application where $H$ is frequently
re-evaluated, this creates an additional major obstacle to
using such algorithms.

{\bf Conclusion.} We strongly suspect that {\it any}
generic nearest neighbors algorithm (used as a black box)
will be impractical for this decoding problem, 
in terms of either memory requirements
or CPU time (or both), simply due to the sheer amount of points
in $HX$. If, in addition, the channel matrix $H$ changes frequently,
such approaches are likely to be completely unaffordable.

\subsection{Nearest Neighbors in $X$}
\label{sec_nn_x}
In Section~\ref{sec_nn_hx} above, we discussed nearest neighbors
searches within the collection $HX$ defined via \eqref{eq_hx_def}.
In comparison, for any vector $y \in \Cc^n$ and integer $1 \leq k < n$,
to find $k$ nearest neighbors of $y$ within $X$ is a much easier task,
for the following reasons:

{\bf 1.} Due to \eqref{eq_bigx_def}, for any $x$ in $X$,
\begin{align}
\| y - x\|^2 = |y(1)-x(1)|^2 + \dots + |y(n)-x(n)|^2, 
\label{eq_yx_dist}
\end{align}
where $x(1),\dots,x(n)$ are in $C$. Thus, the search for nearest neighbors
is reduced to $n$ one-dimensional searches. For example, to find
the first nearest neighbor of $y$ in $X$ one simply needs to minimize
$|y(j)-x(j)|$ separately for every $j=1,\dots,n$.

{\bf 2.} The collection $X$ enjoys various symmetries that can be employed
in nearest neighbor searches. For example, if $C$ is defined via
\eqref{eq_running_c}, and one has found the first $k$ nearest neighbors
of
\begin{align}
x_0 = \left(1, \dots, 1\right)^T
\label{eq_x_special}
\end{align}
within $X$, then, for any $x$ in $X$ and every $1 \leq j \leq k$,
the $j$th nearest neighbor of $x$ within $X$ is computed
from that of $x_0$ by coordinate-wise rotation (in $n$ operations).
In other words, if $x_j$ is the $j$th nearest neighbor of $x_0$,
then the $j$th nearest neighbor $\hat{x}_j$ of 
$\hat{x}$ is defined via the formula
\begin{align}
\hat{x}_j = \left( x_j(1) \cdot \hat{x}(1), \dots,
x_j(n) \cdot \hat{x}(n) \right).
\label{eq_x_special_j}
\end{align}

{\bf 3.} As opposed to $HX$, 
the collection $X$ obviously does not depend on $H$,
and thus any pre-computation on $X$ needs to be done only once and for all.

These observations suggest the following algorithm for decoding 
$y_{\text{obs}}$:

{\bf Step 1.} Select a positive integer $1 \leq k < m^n$.

{\bf Step 2.} Compute the vector $\tilde{x}_{\text{obs}}$ in $\Cc^n$ via the formula
\begin{align}
\tilde{x}_{\text{obs}} = H^{-1} \cdot y_{\text{obs}}.
\label{eq_tilde_x_obs}
\end{align}

{\bf Step 3.} Find the $k$ nearest neighbors $x_1,\dots,x_k$ of
$\tilde{x}_{\text{obs}}$ within $X$.

{\bf Step 4.} For every $j=1,\dots,k$, evaluate $d_j$ via the formula
\begin{align}
d_j = \| H \cdot x_j - y_{\text{obs}} \|^2,
\label{eq_dj}
\end{align}
and find the index $j_0$ that corresponds to the minimal $d_j$.

{\bf Step 5.} Return $x_{\text{nn}(k)}$ defined via the formula
\begin{align}
x_{\text{nn}(k)} = x_{j_0}
\label{eq_x_nn}
\end{align}
(obviously, the subscript $k$ refers to the number of nearest neighbors
used to evaluate $x_{\text{nn}(k)}$).

The vector $x_{\text{nn}(k)}$ computed by this algorithm is
an approximation to $x_{\text{best}}$ defined via \eqref{eq_x_best}.

We summarize the principal input and output parameters
of the algorithm described in this section as follows.

{\bf Calling sequence:}
\begin{align}
\text{nnx}(y_{\text{obs}}; k; \tilde{x}_{\text{obs}}, x_{\text{nn}(k)}).
\label{eq_nnx_calling}
\end{align}

{\bf Input parameters:}

-- $y_{\text{obs}}$ in $\Cc^n$: 

\quad the received message
(a noisy observation of $y(x_{\text{true}})$ defined via \eqref{eq_y_def}). 

-- $k$ (a positive integer):

\quad the number of nearest neighbors to be used. \\

{\bf Output parameters:}

-- $\tilde{x}_{\text{obs}}$ in $X$: 

\quad the "naive" candidate for $x_{\text{true}}$ (see \eqref{eq_tilde_x_obs}).

-- $x_{\text{nn}(k)}$ in $X$:

\quad the candidate for $x_{\text{true}}$ 
found by nearest neighbors search in $X$
(see \eqref{eq_x_nn}). \\

{\bf Memory requirements.} The memory requirements for this
algorithm are minimal; one might only need to store $H$ and the list
of $k$ nearest neighbors of $\tilde{x}_{\text{obs}}$ in $X$, which requires
\begin{align}
M_{\text{nn}} = O(k \cdot n + n^2)
\label{eq_memory_nn}
\end{align}
memory words. Compared, for example, to 
\eqref{eq_memory_preprocess}, this is essentially negligible.

{\bf Cost.} It costs $O(n^3)$ operations to invert $H$; however,
if, for example, we have evaluated the SVD of $H$ beforehand,
the evaluation of $\tilde{x}_{\text{obs}}$ via \eqref{eq_tilde_x_obs} costs
only $O(n^2)$ operations. The evaluation of $x_1,\dots,x_k$
will typically require $O(n \cdot k)$ operations,
and, for every $j$, the evaluation of $d_j$ via \eqref{eq_dj}
requires $O(n^2)$ operations; thus, the total cost of the algorithm
is
\begin{align}
C_{\text{nn}} = O\left( k \cdot n^2 \right) 
\label{eq_cost_nn}
\end{align}
operations.

{\bf Success rate.} Unfortunately, unless $H$ is close to an orthogonal
matrix or $\sigma$ in \eqref{eq_y_def} is small compared to the smallest
singular value of $H$ (see Theorem~\ref{thm_svd}),
the success rate of this scheme will typically be poor
(loosely speaking, due to Theorem~\ref{thm_csn} -- in other words,
$H^{-1}$ magnifies the noise by different factors in different directions).
In the case of the model described in Section~\ref{sec_running},
the success rate of this scheme
is investigated empirically in Section~\ref{sec_numerical} below.

\subsection{Randomized Decoder 1 (RaDe1)}
\label{sec_RaDe1}
In this section, we described a randomized decoding scheme based
on Theorem~\ref{thm_csn} and Corollary~\ref{cor_RaDe1}.
This scheme will be referred to as "Randomized Decoder 1",
or "RaDe1".

\subsubsection{Preliminary discussion}
\label{sec_RaDe1_overview}
Suppose that $U,\Sigma,V$ constitute the SVD of $H$
(see Theorem~\ref{thm_svd}). 
Suppose also that the random vector $y(x_{\text{true}})$ 
is defined via \eqref{eq_y_def},
and that the random vector $\tilde{x}$ in $\Cc^n$
is defined via the formula
\begin{align}
\tilde{x} = H^{-1} \cdot y(x_{\text{true}}).
\label{eq_tilde_x}
\end{align}
Due to the combination of \eqref{eq_tilde_x} with
Theorems~\ref{thm_svd},~\ref{thm_csn},
\begin{align}
\tilde{x} \sim x_{\text{true}} =
\sum_{k=1}^n z_k \cdot \frac{\sigma}{\sigma_k} \cdot v_k,
\label{eq_tilde_x2}
\end{align}
where $v_1,\dots,v_n$ are the columns of $V$, $\sigma_1,\dots,\sigma_n$
are the singular values of $H$, and $z_1,\dots,z_n$ are i.i.d.
standard complex normal variables.

{\bf Observation 1.} The identity \eqref{eq_tilde_x2} admits the following
interpretation: while the distribution of $y(x_{\text{true}})$ is 
radially symmetric about $H \cdot x_{\text{true}}$, the distribution
of $\tilde{x}$ is not. More specifically, the "noise" in $\tilde{x}$
has the largest variance in the direction of $v_n$ and the
smallest variance in the direction of $v_1$. In other words,
$z_n$ has larger effect on how far $\tilde{x}$ is from $x_{\text{true}}$
that $z_1$ does.

Suppose now that, for every $j=1,\dots,n$ and every $k=1,\dots,n$,
we define the real numbers $s_{k}(j)$ and $S_k(j)$ via the formulae
\begin{align}
s_k(j) = \frac{ |v_k(j)|}{\sigma_k},
\label{eq_skj}
\end{align}
\begin{align}
S_k(j) = \sqrt{ \sum_{l=1}^k s_l^2(j) },
\label{eq_skj_2}
\end{align}
respectively.
It follows from the combination of \eqref{eq_skj}, \eqref{eq_skj_2} and
Corollary~\ref{cor_RaDe1} that
\begin{align}
\tilde{x}(j) \sim x_{\text{true}}(j) +
z \cdot \sigma \cdot S_{n-1}(j) + z_n \cdot \sigma \cdot s_n(j), 
\label{eq_xj}
\end{align}
for every $j=1,\dots,n$,
where $z,z_n$ are i.i.d. $N\Cc(0,1)$
(obviously, $\tilde{x}(1),\dots,\tilde{x}(n)$ are not independent
of each other). 

The coefficients $S_{n-1}(j)$ and $s_n(j)$ in \eqref{eq_xj}
are determined by $H$. In the case
when the entries of $H$ have been drawn independently
from $N\Cc(0,1)$ (see Section~\ref{sec_running}), we
can make the following observation.

{\bf Observation 2.} Suppose that $6 \leq n \leq 32$ is an integer,
that the entries of $H$ are i.i.d. $N\Cc(0,1)$, and that the real
random variable $r=r(H)$ is defined via the formula
\begin{align}
r(H) = \frac{s_n(j)}{S_{n-1}(j)},
\label{eq_rdef}
\end{align}
where, for any $H$,  
$1 \leq j \leq n$ is the integer such that $S_{n-1}(j) \leq S_{n-1}(i)$
for every $i = 1, \dots, n$. Then,
\begin{align}
0.25 \leq \mathbb{E}\left[ S_{n-1}(j) \right] \leq 0.3,
\label{eq_exp_snm}
\end{align}
and also
\begin{align}
4.5 \leq \mathbb{E}\left[ r(H) \right] \leq 6.
\label{eq_exp_r}
\end{align}
In other words, the coefficient in front of $z_n$ in \eqref{eq_xj} 
is about 5 times larger (on average) than that in front of $z$.

\subsubsection{RaDe1: informal description}
\label{sec_RaDe1_informal}

The discussion in Section~\ref{sec_RaDe1_overview}
above leads to the following decoding scheme
(based primarily on \eqref{eq_xj} and Observation 2 above).

We select $1 \leq j \leq n$ such that $S_{n-1}(j)$ in \eqref{eq_xj}
is minimal. Then, for every $i=1,\dots,m$,
we assume that 
\begin{align}
x_{\text{true}}(j) = c_i,
\label{eq_xj_ass}
\end{align}
and proceed as follows. Under the assumption \eqref{eq_xj_ass},
the difference $\tilde{x}(j)-x_{\text{true}}(j)$ is a sum
of two independent complex normal variables with different variances
(see \eqref{eq_xj}). We sample one of them; more specifically,
we draw the complex number $\tilde{z}$ from the distribution $N\Cc(0,1)$.
The assumptions \eqref{eq_xj_ass} and $z=\tilde{z}$ determine
the value of $z_n$ in \eqref{eq_xj}.
However, we observe that $z_n$ is "responsible" for the largest component of
the noise in $\tilde{x}$, due to \eqref{eq_tilde_x2}. Thus, we can reduce
the noise by subtracting $z_n \cdot \sigma \cdot \sigma_n^{-1} \cdot v_n$
from the observed value of $\tilde{x}$. Next, we look for the nearest
neighbor $x_i$ of the "improved" $\tilde{x}$ in $X$
(a simple task; see Section~\ref{sec_nn_x}). We observe that,
provided that the assumption \eqref{eq_xj_ass} is correct and
the noise in $\tilde{x}$ has been reduced, $x_i$ is more likely
to be equal to $x_{\text{true}}$ than if we were to use
the nearest neighbor search in $X$ for the original $\tilde{x}$.

To validate the assumption \eqref{eq_xj_ass}, we make several observations.

{\bf 1.} For each $i=1,\dots,m$, the assumption \eqref{eq_xj_ass}
yields $x_i$. We can simply select the best $x_i$ out of $x_1,\dots,x_m$ 
(the one for which $\| H \cdot x_i - y_{\text{obs}} \|$ is
the smallest).

{\bf 2.} If $x_i = x_{\text{true}}$, then we can determine
the noise from \eqref{eq_y_def}; in particular, we can
compute the square of the Euclidean norm of the noise.
However, this quantity must be an observed value of a
$\chi^2(2n)$ random variable (multiplied by $\sigma^2$),
whose cdf is defined via \eqref{eq_chi2_cdf_even}; we use it
to estimate our "confidence" in the statement that $x_i = x_{\text{true}}$.

{\bf 3.} The value of $z_n$ must be an observed value of a standard normal
complex variable. In particular, $|z_n|^2 \sim \chi^2(2)$, and we
can reject the assumption \eqref{eq_xj_ass} if $|z_n|^2$ is too large.

Finally, we observe that the scheme described above is randomized.
The apparent downside is that, even when the assumption \eqref{eq_xj_ass}
is correct for a particular value of $i$, there is a finite probability
that $x_i \neq x_{\text{best}}$ (see \eqref{eq_x_best}). However,
the obvious advantage of randomization is that the scheme
can be iterated (and the probability of failure will decrease 
with the number of iterations; see, however, 
Remark~\ref{rem_RaDe1_stuck} in Section~\ref{sec_exp_3}).

\subsubsection{Basic RaDe1: detailed description}
\label{sec_RaDe1_detailed}

This section contains a detailed description of the decoding
scheme described in Section~\ref{sec_RaDe1_informal} above. \\

{\bf Precomputation.} Suppose that $H$ is the $n$ by $n$
channel matrix. \\

{\bf Step 1.} Evaluate the SVD (e.g.
the matrices $U,\Sigma,V$) of $H$ (see Theorem~\ref{thm_svd}).

{\bf Step 2.} For every $k=1,\dots,n$, evaluate 
$s_n(k)$ and $S_{n-1}(k)$
via \eqref{eq_skj}, \eqref{eq_skj_2}, respectively.

{\bf Step 3.} Find $1 \leq j \leq n$ such that $S_{n-1}(j) \leq S_{n-1}(k)$
for every $k=1,\dots,n$. \\

{\bf Decoding.} Suppose that $x_{\text{true}}$
in $X$ is the (unknown) transmitted message, 
that $\sigma > 0$ is the component-wise variation of the noise in 
\eqref{eq_y_def}, and 
that $y_{\text{obs}}$ is the observed
value of $y(x_{\text{true}})$ (see \eqref{eq_y_def}). \\

For every $i=1,\dots,m$ proceed as follows: \\

{\bf Step 1.} Sample the complex number $\hat{z} \sim N\Cc(0,1)$.

{\bf Step 2.} Evaluate $\tilde{x}_{\text{obs}}$ in $\Cc^n$ via 
\eqref{eq_tilde_x_obs}.

{\bf Step 3.} Evaluate $\hat{z}_n$ via the formula
\begin{align}
\hat{z}_n = 
\frac{ \tilde{x}_{\text{obs}}(j) - c_i - 
       \hat{z} \cdot \sigma \cdot S_{n-1}(j)}
{v_n(j)},
\label{eq_zn_hat}
\end{align}
where $v_n$ in $\Cc^n$ is the $n$th column of $V$.

{\bf Step 4.} Evaluate $\tilde{x}_i$ in $\Cc^n$ via the formula
\begin{align}
\tilde{x}_i = \tilde{x}_{\text{obs}} - \hat{z}_n \cdot v_n.
\label{eq_tilde_x_j}
\end{align}

{\bf Step 5.} Find the nearest neighbor $x_i$ of $\tilde{x}_i$ in $X$.

{\bf Step 6.} Evaluate $w_i$ in $\Cc^n$ via the formula
\begin{align}
w_i = y_{\text{obs}} - H \cdot x_i.
\label{eq_w_i}
\end{align}

{\bf Step 7.} Evaluate the real numbers $r_i$ and $\chi_i$ via the formulae
\begin{align}
r_i = \frac{ |w_i(1)|^2 + \dots + |w_i(n)|^2 }{\sigma^2}
\label{eq_r_i}
\end{align}
and
\begin{align}
\chi_i = 1 - F_{\chi^2(2n)}(r_i),
\label{eq_chi_i}
\end{align}
respectively,
where $F_{\chi^2(2n)}$ is the cdf of the $\chi^2_{2n}$-distribution
(see \eqref{eq_chi2_cdf_even}). \\

Thus, for each $i=1,\dots,m$, the scheme produces $x_i \in X$
and the real numbers $r_i$ and $\chi_i$. Suppose
that $i_0$ is the index of the smallest
$r_i$ (and thus largest $\chi_i$) among $r_1,\dots,r_n$. The algorithm
returns the vector $x_{\text{RaDe1}}$ in $X$ defined via the formula
\begin{align}
x_{\text{RaDe1}} = x_{i_0}.
\label{eq_x_RaDe1}
\end{align}

{\bf Observation.} Under the assumption that 
$x_{\text{true}} = x_i$, 
the vector $w_i$ defined via \eqref{eq_w_i} is an observed value
of the random vector having the distribution 
$\sigma \cdot N\Cc(0_d,I_d)$ (see \eqref{eq_y_def}). Consequently,
$r_i$ is an observed value from the $\chi^2_{2n}$ distribution,
and $\chi_i$ is the probability that a $\chi^2_{2n}$ random variable
attains values larger than $r_i$ (we refer to $\chi_i$ as
the "confidence").

As a conclusion, we summarize the principal input and output parameters
of the decoding scheme described in this section.

{\bf Calling sequence:}
\begin{align}
\text{RaDe1\_search}\left( y_{\text{obs}}; x_{\text{RaDe1}}, r, \chi\right).
\label{eq_RaDe1_calling}
\end{align}

{\bf Input parameters:}

-- $y_{\text{obs}}$ in $\Cc^n$: 

\quad the received message
(a noisy observation of $y(x_{\text{true}})$ defined via \eqref{eq_y_def}). \\

{\bf Output parameters:}

-- $x_{\text{RaDe1}}$ in $X$: 

\quad the candidate for the transmitted message (see \eqref{eq_x_RaDe1}).

-- $r$ (a positive real number):

\quad the squared norm of the noise 
     (provided that $x_{\text{true}} = x_{\text{RaDe1}}$), see \eqref{eq_r_i}.

-- $\chi$ (a real number between 0 and 1):

\quad the confidence in $x_{\text{RaDe1}}$ (see \eqref{eq_chi_i}).

\subsubsection{RaDe1: full algorithm}
\label{sec_RaDe1_full}

In this section, we describe a decoding algorithm
whose basic step is the 
scheme described above (see Sections~\ref{sec_RaDe1_informal},
\ref{sec_RaDe1_detailed}).

{\bf Additional parameters.}

-- \text{min\_RaDe1}: a positive integer
(minimal number of iterations)

-- $\text{max\_RaDe1} \geq \text{min\_RaDe1}$: a positive integer
(maximal number of iterations)

-- $\chi_{\text{thresh}}$: a real number between 0 and 1
(a confidence threshold)

{\bf Description.}

{\bf 1.} For each $1 \leq j < \text{min\_RaDe1}$, 
call $\text{RaDe1\_search}\left( y_{\text{obs}}; x_j, r_j, \chi_j\right)$.

{\bf 2.} Among all $(x_j,r_j,\chi_j)$, select the triplet
\begin{align}
\left(x_{\text{RaDe1}}, r_{\text{RaDe1}}, \chi_{\text{RaDe1}}\right) =
(x_i,r_i,\chi_i)
\label{eq_trip}
\end{align}
that corresponds to the smallest $r_i$.

{\bf 3.} For each $\text{min\_RaDe1} \leq j \leq \text{max\_RaDe1}$:

\quad {\bf 3a.} 
call $\text{RaDe1\_search}\left( y_{\text{obs}}; x_j, r_j, \chi_j\right)$.

\quad {\bf 3b.}
if $r_j < r_{\text{RaDe1}}$, set
\begin{align}
\left(x_{\text{RaDe1}}, r_{\text{RaDe1}}, \chi_{\text{RaDe1}}\right) =
(x_j,r_j,\chi_j).
\label{eq_trip2}
\end{align}

\quad {\bf 3c.} if $\chi_{\text{RaDe1}} > \chi_{\text{thresh}}$,
stop. \\

{\bf Comment.} The algorithm conducts at least min\_RaDe1 iterations
of the basic scheme. Then, if the confidence in the best guess
is high enough (compared to $\chi_{\text{thresh}}$), 
the algorithm stops. Otherwise, the basic scheme
is called iteratively until the confidence is high enough
{\it or} the total number of iterations reaches max\_RaDe1.

To conclude, we summarize the principal input and output parameters
of the algorithm described in this section.

{\bf Calling sequence:}
\begin{align}
\text{RaDe1\_all}\left( y_{\text{obs}}; 
\text{min\_RaDe1}, \text{max\_RaDe1}, \chi_{\text{thresh}};
x_{\text{RaDe1}}, r_{\text{RaDe1}}, \chi_{\text{RaDe1}}\right).
\label{eq_RaDe1_all_calling}
\end{align}

{\bf Input parameters:}

-- $y_{\text{obs}}$ in $\Cc^n$ (see Section~\ref{sec_RaDe1_detailed}). 

-- min\_RaDe1: see above.

-- max\_RaDe1: see above.

-- $0 < \chi_{\text{thresh}} < 1$: see above. \\

{\bf Output parameters:}

-- $x_{\text{RaDe1}}$ in $X$ (see \eqref{eq_trip}).

-- $r_{\text{RaDe1}}$ (a positive real number): see \eqref{eq_trip}.

-- $\chi_{\text{RaDe1}}$ (a real number between 0 and 1): see \eqref{eq_trip}.

\subsubsection{RaDe1: cost and memory requirements}
\label{sec_RaDe1_cost}

In this section, we describe the cost and memory requirements
of the basic decoding scheme described in Sections~\ref{sec_RaDe1_informal}, 
\ref{sec_RaDe1_detailed} above.

{\bf Memory requirements.} The memory requirements of the scheme
are or the order
\begin{align}
M_{\text{RaDe1}} = O\left( m \cdot n + n^2 \right)
\label{eq_memory_RaDe1}
\end{align}
memory words - in other words, absolutely minimal
(compare for example to
\eqref{eq_memory_preprocess}; see also Section~\ref{sec_running}
for typical values of $m$ and $n$).

{\bf Cost.} The precomputation step (see Section~\ref{sec_RaDe1_detailed})
requires $O(n^3)$ operations, and should not be re-done until
$H$ changes.

On the other hand, each decoding step requires
\begin{align}
C_{\text{RaDe1}} = O\left( m \cdot n^2 \right)
\label{eq_cost_RaDe1}
\end{align}
operations. Therefore, the full algorithm described in 
Section~\ref{sec_RaDe1_full} requires between
\begin{align}
C_{\text{RaDe1,best}} = O\left( \text{min\_RaDe1} \cdot m \cdot n^2 \right)
\label{eq_cost_best}
\end{align}
and
\begin{align}
C_{\text{RaDe1,worst}} = O\left( \text{max\_RaDe1} \cdot m \cdot n^2 \right)
\label{eq_cost_worst}
\end{align}
operations.

{\bf Success rate.} Obviously, the success rate of RaDe1
depends on various parameters of the problems (e.g. $H, \sigma, X$)
as well as on the parameters of the algorithm
(see Section~\ref{sec_RaDe1_full}). In addition, since the scheme
is randomized, for each set of conditions, there is a certain probability
of failure (that decreases with the number
of iterations of the basic step).
Obviously, there is a trade-off between the probability of success
and number of iterations.

The performance of the scheme is demonstrated via several experiments
in Section~\ref{sec_numerical}.

\subsection{Randomized Decoder 2 (RaDe2)}
\label{sec_RaDe2}

In this section, we describe yet another decoding scheme.
This scheme is closely related to RaDe1 from Section~\ref{sec_RaDe1},
and can be viewed as a generalization of the latter.
This scheme will be referred to as "Randomized Decoder 2",
or "RaDe2".

\subsubsection{Preliminary discussion}
\label{sec_RaDe2_overview}
The decoding scheme RaDe1 from Section~\ref{sec_RaDe1}
is based on the equation \eqref{eq_xj}. Loosely speaking,
one fixes the coordinate $j$, "guesses" the value of $x_{\text{true}}(j)$,
samples $z$ from $N\Cc(0,1)$, and determines the value
of $z_n$ from \eqref{eq_xj}. Then, the noise in $\tilde{x}$ is
reduced by subtracting its (estimated) largest component in the direction
of $v_n$ (see \eqref{eq_tilde_x2}).

In this section, we describe an algorithm that uses two 
(rather than one)
coordinates. The resulting scheme is more accurate than
RaDe1 (see Section~\ref{sec_RaDe1}); however, it is also more
computationally expensive. 

Along the lines of Section~\ref{sec_RaDe1_overview}, we make the following
observation. Suppose that $1 \leq j_1, j_2 \leq n$ correspond
to the two smallest values of $S_{n-2}(1), \dots, S_{n-2}(n)$ defined
via \eqref{eq_skj_2}. Due to the combination of 
\eqref{eq_tilde_x2},
\eqref{eq_skj},
\eqref{eq_skj_2},
\begin{align}
& \tilde{x}(j_1) \sim x_{\text{true}}(j_1) +
u \cdot \sigma \cdot S_{n-2}(j_1) + 
z_{n-1} \cdot \sigma \cdot s_{n-1}(j_1) +
z_n \cdot \sigma \cdot s_n(j_1), \\ 
& \tilde{x}(j_2) \sim x_{\text{true}}(j_2) +
w \cdot \sigma \cdot S_{n-2}(j_2) + 
z_{n-1} \cdot \sigma \cdot s_{n-1}(j_2) +
z_n \cdot \sigma \cdot s_n(j_2),
\label{eq_xj1}
\end{align}
where $u,w \sim N\Cc(0,1)$ (however, $u$ and $w$ are not independent
of each other), while $z_{n-1}, z_n \sim N\Cc(0,1)$
are independent of each other and of $u,w$.

Obviously, due to \eqref{eq_skj_2}, 
the coefficients of $u, w$ in \eqref{eq_xj1}
are even smaller than the coefficient of $z$ in \eqref{eq_xj}
(see Section~\ref{sec_RaDe1}, in particular Observation 2).

\subsubsection{RaDe2: informal description}
\label{sec_RaDe2_informal}

The discussion in Section~\ref{sec_RaDe2_overview}
above leads to the following decoding scheme
(somewhat similar to that from Section~\ref{sec_RaDe1}).

We select $1 \leq j_1,j_2 \leq n$ such that 
$S_{n-2}(j_1), S_{n-2}(j_2)$, respectively,
are the smallest among $S_{n-2}(1), \dots, S_{n-2}(n)$.
Then, for every {\it pair} of indices
$1 \leq i_1, i_2 \leq m$, we assume that
\begin{align}
& x_{\text{true}}(j_1) = c_{i_1}, \\
& x_{\text{true}}(j_2) = c_{i_2},
\label{eq_xj1_ass}
\end{align}
and proceed as follows. 
Under the assumption \eqref{eq_xj1_ass},
the difference $\tilde{x}(j)-x_{\text{true}}(j)$ for each $j=j_1,j_2$
is a sum
of three independent complex normal variables with different variances
(see \eqref{eq_xj1}). 
We sample $\hat{u},\hat{w}$ in \eqref{eq_xj1} by drawing i.i.d. 
$z_1,\dots,z_{n-2} \sim N\Cc(0,1)$ and using \eqref{eq_tilde_x2}.
Then, we determine the values of $z_{n-1}, z_n$ in
\eqref{eq_xj1} under the assumption \eqref{eq_xj1_ass} and
$u=\hat{u}, w=\hat{w}$ by solving the corresponding
two by two linear system.
We reduce the noise in $\tilde{x}$ by subtracting
$z_{n-1} \cdot \sigma \cdot \sigma_{n-1}^{-1} \cdot v_{n-1}$ and
$z_n \cdot \sigma \cdot \sigma_n^{-1} \cdot v_n$ from
the observed value of $\tilde{x}$ (see \eqref{eq_tilde_x2}).

Next, we look for the nearest
neighbor $x_{i_1,i_2}$ of the "improved" $\tilde{x}$ in $X$
(see Section~\ref{sec_nn_x}). We observe that,
provided that the assumption \eqref{eq_xj1_ass} is correct and
also the noise in $\tilde{x}$ has indeed been reduced, 
$x_{i_1,i_2}$ is more likely
to be equal to $x_{\text{true}}$ than if we were to use
the nearest neighbor search in $X$ for the original $\tilde{x}$.

To validate the assumption \eqref{eq_xj_ass}, we make several observations.

{\bf 1.} For each $1 \leq i_1,i_2 \leq m$, the assumption \eqref{eq_xj1_ass}
yields $x_{i_1,i_2}$. 
We can simply select the best one out of the $m^2$ possible 
$x_{i_1,i_2}$ (actually, we do not even need to examine all $m^2$
possibilities; see Section~\ref{sec_RaDe2_detailed} below).

{\bf 2.} If $x_{i_1,i_2} = x_{\text{true}}$, then we can determine
the noise from \eqref{eq_y_def}; in particular, we can
compute the square of the Euclidean norm of the noise.
However, this quantity must be an observed value of a
$\chi^2(2n)$ random variable (multiplied by $\sigma^2$),
whose cdf is defined via \eqref{eq_chi2_cdf_even}; we use it
to estimate our "confidence" in the statement that 
$x_{i_1,i_2} = x_{\text{true}}$.

{\bf 3.} The value of $|z_{n-1}|^2 + |z_n|^2$ 
must be an observed value of a $\chi^2(4)$ random variable, and thus we
can reject the assumption \eqref{eq_xj1_ass} 
{\it before} evaluating $x_{i_1,i_2}$
if $|z_n|^2$ is too large.

\subsubsection{Basic RaDe2: detailed description}
\label{sec_RaDe2_detailed}

This section contains a detailed description of the decoding
scheme described in Section~\ref{sec_RaDe2_informal} above. 

Compared to the scheme described in Section~\ref{sec_RaDe1_detailed},
the procedure below accepts an additional input parameter:
namely, the "early exit confidence" $\chi_{\text{stop}}$
(a real number between 0 and 1: see \eqref{eq_chi_stop} below). \\

{\bf Precomputation.} Suppose that $H$ is the $n$ by $n$
channel matrix. \\

{\bf Step 1.} Evaluate the SVD (e.g.
the matrices $U,\Sigma,V$) of $H$ (see Theorem~\ref{thm_svd}).

{\bf Step 2.} For every $k=1,\dots,n$, evaluate 
$s_n(k), s_{n-1}(k)$ and $S_{n-2}(k)$
via \eqref{eq_skj}, \eqref{eq_skj_2}, respectively.

{\bf Step 3.} Find $1 \leq j_1,j_2 \leq n$ such that 
$S_{n-2}(j_1), S_{n-2}(j_2)$ are the two smallest values among 
$S_{n-2}(1), \dots, S_{n-2}(n)$. \\

{\bf Decoding.} Suppose that $x_{\text{true}}$
in $X$ is the (unknown) transmitted message, 
that $\sigma > 0$ is the component-wise variation of the noise in 
\eqref{eq_y_def}, and 
that $y_{\text{obs}}$ is the observed
value of $y(x_{\text{true}})$ (see \eqref{eq_y_def}). \\

For every $i_1=1,\dots,m$ and $i_2 = 1,\dots, m$, proceed as follows: \\

{\bf Step 1.} Sample the $n-2$ complex numbers 
$\hat{z_1}, \dots, \hat{z_{n-2}}$ independently from $N\Cc(0,1)$.

{\bf Step 2.} Evaluate $\hat{u}, \hat{w}$ via the formulae
\begin{align}
& \hat{u} = 
\sum_{k=1}^{n-2} \hat{z}_k \cdot \frac{\sigma}{\sigma_k} \cdot v_k(j_1), \\
& \hat{w} = 
\sum_{k=1}^{n-2} \hat{z}_k \cdot \frac{\sigma}{\sigma_k} \cdot v_k(j_2),
\label{eq_tilde_uw}
\end{align}
where $v_k$ in $\Cc^n$ is the $k$th column of $V$ for every $k=1,\dots,n$.

{\bf Step 3.} Evaluate $\tilde{x}_{\text{obs}}$ in $\Cc^n$ via 
\eqref{eq_tilde_x_obs}.

{\bf Step 4.} Evaluate $\hat{z}_{n-1}, \hat{z}_n$ via 
solving the two by two linear system
\begin{align}
& z_{n-1} \cdot \frac{\sigma \cdot v_{n-1}(j_1)}{\sigma_{n-1}} 
+
z_n \cdot \frac{\sigma \cdot v_n(j_1)}{\sigma_n} 
=
\tilde{x}_{\text{obs}}(j_1) - c_{i_1} - \hat{u}, \\
& z_{n-1} \cdot \frac{\sigma \cdot v_{n-1}(j_2)}{\sigma_{n-1}} 
+
z_n \cdot \frac{\sigma \cdot v_n(j_2)}{\sigma_n} 
=
\tilde{x}_{\text{obs}}(j_2) - c_{i_2} - \hat{w},
\label{eq_znmn}
\end{align}
in the unknowns $z_{n-1}, z_n$.

{\bf Step 5.} Evaluate the real numbers $r$ and $\chi$ via the formulae
\begin{align}
& r = |\hat{z}_{n-1}|^2 + |\hat{z}_n|^2, \\
& \chi = 1-F_{\chi^2(4)}(r), 
\label{eq_rchi}
\end{align}
respectively, where $F_{\chi^2(4)}$ is the cdf of the distribution
$\chi^2(4)$ defined via the formula \eqref{eq_chi2_cdf_even}.

{\bf Step 6.} If 
\begin{align}
\chi < \chi_{\text{stop}}, 
\label{eq_chi_stop}
\end{align}
skip the rest of
the steps for these values of $i_1,i_2$.

{\bf Comment.} Roughly speaking, 
if $\chi$ is too small, it means that $\hat{z}_{n-1}$
and $\hat{z}_n$ are unlikely to be observed values
of two i.i.d. $N\Cc(0,1)$ random variables; in other words,
the assumption \eqref{eq_xj1_ass} is likely to be wrong
for these values of $i_1,i_2$.

{\bf Step 7.} Evaluate $\tilde{x}_{i_1,i_2}$ in $\Cc^n$ via the formula
\begin{align}
\tilde{x}_{i_1,i_2} = \tilde{x}_{\text{obs}} 
- \hat{z}_{n-1} \cdot \frac{\sigma}{\sigma_{n-1}} \cdot v_{n-1}
- \hat{z}_n \cdot \frac{\sigma}{\sigma_n} \cdot v_n
\label{eq_tilde_x_jj}
\end{align}

{\bf Step 8.} Find the nearest neighbor $x_{i_1,i_2}$ of 
$\tilde{x}_{i_1,i_2}$ in $X$.

{\bf Step 9.} Evaluate $w_{i_1,i_2}$ in $\Cc^n$ via the formula
\begin{align}
w_{i_1,i_2} = y_{\text{obs}} - H \cdot x_{i_1,i_2}.
\label{eq_w_i1}
\end{align}

{\bf Step 10.} Evaluate the real numbers $r_{i_1,i_2}$ and 
$\chi_{i_1,i_2}$ via the formulae
\begin{align}
r_{i_1,i_2} = \frac{ |w_{i_1,i_2}(1)|^2 + \dots + |w_{i_1,i_2}(n)|^2 }{\sigma^2}
\label{eq_r_i1}
\end{align}
and
\begin{align}
\chi_{i_1,i_2} = 1 - F_{\chi^2(2n)}(r_{i_1,i_2}),
\label{eq_chi_i1}
\end{align}
respectively,
where $F_{\chi^2(2n)}$ is the cdf of the $\chi^2_{2n}$-distribution
(see \eqref{eq_chi2_cdf_even}). \\

Thus, for every pair $(i_1,i_2)$ for which Steps 7-10 were performed,
the scheme produces $x_{i_1,i_2} \in X$
and the real numbers $r_{i_1,i_2}$ and $\chi_{i_1,i_2}$.
Suppose that, among these, the triplet
\begin{align}
\left( x_{\text{RaDe2}}, r_{\text{RaDe2}}, \chi_{\text{RaDe2}} \right)
=
\left( x_{i_1,i_2}, r_{i_1,i_2}, \chi_{i_1,i_2} \right)
\label{eq_x_RaDe2}
\end{align}
corresponds to the smallest $r_{i_1,i_2}$ 
(equivalently, to the largest $\chi_{i_1,i_2}$).

As a conclusion, we summarize the principal input and output parameters
of the decoding scheme described in this section.

{\bf Calling sequence:}
\begin{align}
\text{RaDe2\_search}\left( 
y_{\text{obs}}; \chi_{\text{stop}};
x_{\text{RaDe2}}, r, \chi \right).
\label{eq_RaDe2_calling}
\end{align}

{\bf Input parameters:}

-- $y_{\text{obs}}$ in $\Cc^n$: 

\quad the received message
(a noisy observation of $y(x_{\text{true}})$ defined via \eqref{eq_y_def}).

-- $\chi_{\text{stop}}$ (a real number between 0 and 1):

\quad the confidence threshold used in Step 6 (see \eqref{eq_chi_stop}). \\

{\bf Output parameters:}

-- $x_{\text{RaDe2}}$ in $X$: 

\quad the candidate for the transmitted message (see \eqref{eq_x_RaDe2}).

-- $r$ (a positive real number):

\quad the squared norm of the noise 
     (provided that $x_{\text{true}} = x_{\text{RaDe2}}$), see \eqref{eq_r_i1}.

-- $\chi$ (a real number between 0 and 1):

\quad the confidence in $x_{\text{RaDe2}}$ (see \eqref{eq_chi_i1}).

\subsubsection{RaDe2: full algorithm}
\label{sec_RaDe2_full}

In this section, we describe a decoding algorithm
whose basic step is the 
scheme described above (see Sections~\ref{sec_RaDe2_informal},
\ref{sec_RaDe2_detailed}).
This algorithm is closely related to RaDe1
(see Section~\ref{sec_RaDe1_full}), except that the basic
step of the latter (see Section~\ref{sec_RaDe1_detailed})
is replaced with the one from Section~\ref{sec_RaDe2_detailed}.

{\bf Additional parameters.}

-- \text{min\_RaDe2}: a positive integer
(minimal number of iterations)

-- $\text{max\_RaDe2} \geq \text{min\_RaDe2}$: a positive integer
(maximal number of iteration)

-- $\chi_{\text{thresh}}$: a real number between 0 and 1
(a confidence threshold)

-- $\chi_{\text{stop}}$: 
a real number between 0 and 1 
(the early exit confidence; see \eqref{eq_chi_stop}).

{\bf Description.}

{\bf 1.} For each $1 \leq j < \text{min\_RaDe2}$, 
call $\text{RaDe2\_search}\left( y_{\text{obs}}; 
\chi_{\text{stop}}; x_j, r_j, \chi_j\right)$.

{\bf 2.} Among all $(x_j,r_j,\chi_j)$, select the triplet
\begin{align}
\left(x_{\text{RaDe2}}, r_{\text{RaDe2}}, \chi_{\text{RaDe2}}\right) =
(x_i,r_i,\chi_i)
\label{eq_trip_RaDe2}
\end{align}
that corresponds to the smallest $r_i$.

{\bf 3.} For each $\text{min\_RaDe2} \leq j \leq \text{max\_RaDe2}$:

\quad {\bf 3a.} 
call $\text{RaDe2\_search}\left( y_{\text{obs}}; 
\chi_{\text{stop}}; x_j, r_j, \chi_j\right)$.

\quad {\bf 3b.}
if $r_j < r_{\text{RaDe2}}$, set
\begin{align}
\left(x_{\text{RaDe2}}, r_{\text{RaDe2}}, \chi_{\text{RaDe2}}\right) =
(x_j,r_j,\chi_j).
\label{eq_trip_RaDe22}
\end{align}

\quad {\bf 3c.} if $\chi_{\text{RaDe2}} > \chi_{\text{thresh}}$,
stop. \\

{\bf Comment.} The algorithm conducts at least min\_RaDe2 iterations
of the basic scheme. Then, if the confidence in the best guess
is high enough (compared to $\chi_{\text{thresh}}$), 
the algorithm stops. Otherwise, the basic scheme
is called iteratively until the confidence is high enough
{\it or} the total number of iterations reaches max\_RaDe2.

To conclude, we summarize the principal input and output parameters
of the algorithm described in this section.

{\bf Calling sequence:}
\begin{align}
\text{RaDe2\_all}\left( y_{\text{obs}}; 
\text{min\_RaDe2}, \text{max\_RaDe2}, \chi_{\text{thresh}}, 
\chi_{\text{stop}};
x_{\text{RaDe2}}, r_{\text{RaDe2}}, \chi_{\text{RaDe2}}\right).
\label{eq_RaDe2_all_calling}
\end{align}

{\bf Input parameters:}

-- $y_{\text{obs}}$ in $\Cc^n$ (see Section~\ref{sec_RaDe2_detailed}). 

-- min\_RaDe2: see above.

-- max\_RaDe2: see above.

-- $0 < \chi_{\text{thresh}} < 1$: see above. 

-- $0 < \chi_{\text{stop}} < 1$:  the confidence threshold
used in \eqref{eq_chi_stop}. \\

{\bf Output parameters:}

-- $x_{\text{RaDe2}}$ in $X$ (see \eqref{eq_trip_RaDe2}).

-- $r_{\text{RaDe2}}$ (a positive real number): see \eqref{eq_trip_RaDe2}.

-- $\chi_{\text{RaDe2}}$ (a real number between 0 and 1): 
see \eqref{eq_trip_RaDe2}.

\subsubsection{RaDe2: cost and memory requirements}
\label{sec_RaDe2_cost}

In this section, we describe the cost and memory requirements
of the basic scheme described in Sections~\ref{sec_RaDe2_informal}, 
\ref{sec_RaDe2_detailed} above.

{\bf Memory requirements.} The memory requirements of the scheme
are or the order
\begin{align}
M_{\text{RaDe2}} = O\left( m^2 \cdot n + n^2 \right)
\label{eq_memory_RaDe2}
\end{align}
memory words - in other words, absolutely minimal
(compare for example to
\eqref{eq_memory_preprocess}; see also Section~\ref{sec_running}
for typical values of $m$ and $n$).

{\bf Cost.} The precomputation step (see Section~\ref{sec_RaDe1_detailed})
requires $O(n^3)$ operations, and should not be re-done until
$H$ changes.

On the other hand, each decoding step requires
\begin{align}
C_{\text{RaDe2}} = O\left( \alpha \cdot m^2 \cdot n^2 \right)
\label{eq_cost_RaDe2}
\end{align}
operations, where $0 < \alpha < 1$ is the proportion of 
pairs $(i_1,i_2)$ for which Steps 7-10 in Section~\ref{sec_RaDe2_detailed}
are performed
(e.g. for which $\chi$ defined via \eqref{eq_rchi}
 is greater than $\chi_{\text{stop}}$).
 Therefore, the full algorithm described in 
Section~\ref{sec_RaDe2_full} requires between
\begin{align}
C_{\text{RaDe2,best}} = O\left( \text{min\_RaDe2} \cdot m^2 \cdot n^2 \right)
\label{eq_cost_RaDe2_best}
\end{align}
and
\begin{align}
C_{\text{RaDe2,worst}} = O\left( \text{max\_RaDe2} \cdot m^2 \cdot n^2 \right)
\label{eq_cost_RaDe2_worst}
\end{align}
operations.

{\bf Observation.} By comparing \eqref{eq_cost_RaDe2} to \eqref{eq_cost_RaDe1}
in Section~\ref{sec_RaDe1_cost}, we observe that the basic step of RaDe2
(see Section~\ref{sec_RaDe2_detailed}) is typically
somewhat slower than the basic
step of RaDe1
(see Section~\ref{sec_RaDe1_detailed})
roughly by a factor of $\alpha \cdot m$, 
where $0<\alpha<1$ is a real number
(see \eqref{eq_cost_RaDe2} above). Obviously, $\alpha$ depends,
among other things, on the parameter $\chi_{\text{stop}}$
(see Section~\ref{sec_RaDe2_detailed}).

{\bf Success rate.} Obviously, the success rate of RaDe2
depends on various parameters of the problems (e.g. $H, \sigma, X$)
as well as on the parameters of the algorithm
(see Section~\ref{sec_RaDe2_full}). In addition, since the scheme
is randomized, for each set of conditions, there is a certain probability
of failure (that decreases with the number
of iterations of the basic step;
see also Remark~\ref{rem_RaDe2_stuck} in Section~\ref{sec_exp_4}).
Obviously, there is a trade-off between the probability of success
and number of iterations.
In addition, compared to RaDe1 from Section~\ref{sec_RaDe1_full},
RaDe2 is, in general, slower, but has a higher success rate
(see Section~\ref{sec_numerical} below).

Also, in Section~\ref{sec_numerical}  we demonstrate
the performance of the scheme via several experiments.

\subsection{Supercharging}
\label{sec_super}

In this section, we describe a procedure that should not be used
on its own, but rather as an additional step after
the algorithms from previous sections, with the goal
to improve their output
at a moderate
computational cost.

\subsubsection{Supercharging: informal description}
\label{sec_super_informal}

Either of the algorithms RaDe1, RaDe2 
(see Sections~\ref{sec_RaDe1}, \ref{sec_RaDe2}) 
computes a "candidate" $x$ 
for $x_{\text{best}}$ defined via \eqref{eq_x_best};
in addition, it evaluates the confidence $\chi$ in this candidate
(see \eqref{eq_RaDe1_calling}, \eqref{eq_RaDe2_calling}).
This candidate can be improved (e.g. replaced by a better candidate)
by a procedure that we call "supercharging".

The idea behind supercharging is based on the observation
that even if $x$ is not equal to $x_{\text{best}}$, the latter
can still be among several nearest neighbors of $x$ in $X$.
In other words, we fix the integer $k_1$ of nearest neighbors
that we wish to inspect and find $k_1$ nearest neighbors
$x_1,\dots,x_{k_1}$ of $x$ in $X$
(see Section~\ref{sec_nn_x}).
Among these, we find the one that minimizes the distance between
$H \cdot x_i$ and $y_{\text{obs}}$.

\subsubsection{Supercharging: detailed description}
\label{sec_supercharging_detailed}

Suppose that $y_{\text{obs}}$ in $\Cc^n$
is the observed value of $x_{\text{true}}$
(see Section~\ref{sec_task}), that 
$x$ in $X$ is a candidate for $x_{\text{best}}$
evaluated by either 
RaDe1 or RaDe2
(see Sections~\ref{sec_RaDe1}, \ref{sec_RaDe2}), 
that $r$ is the (normalized) squared norm of the noise,
and that $\chi$ is the confidence in $x$ 
(see \eqref{eq_trip}, \eqref{eq_trip_RaDe2}). Suppose also
that $k_1>0$ is the number of nearest neighbors of $x$
that we want to inspect. Supercharging consists of the following steps.

{\bf Step 1.} Evaluate the $k_1$ nearest neighbors $x_1,\dots,x_{k_1}$
of $x$ in $X$ (see Section~\ref{sec_nn_x}).

{\bf Step 2.} For every $i=1,\dots,k_1$, evaluate the real number $r_i$
via the formula
\begin{align}
r_i = \frac{ \| H \cdot x_i - y_{\text{obs}} \|^2 }{ \sigma^2 }.
\label{eq_ri_super}
\end{align}

{\bf Step 3.} Find the minimum $r_i$ among $r_1,\dots,r_{k_1}$.

{\bf Step 4.} If $r_i < r$, evaluate $x_{\text{super}}$ in $X$ and
the real numbers $r_{\text{super}}, \chi_{\text{super}}$ via the formulae
\begin{align}
\label{eq_x_super}
& x_{\text{super}} = x_i, \\
\label{eq_r_super}
& r_{\text{super}} = r_i, \\
\label{eq_chi_super}
& \chi_{\text{super}} = 1 - F_{\chi^2(2n)}(r_i),
\end{align}
where $F_{\chi^2(2n)}$ is the cdf of the $\chi^2(2n)$ distribution
defined via \eqref{eq_chi2_cdf_even}.

As a conclusion, we summarize the principal input and output parameters
of supercharging.

{\bf Calling sequence:}
\begin{align}
\text{super}\left( y_{\text{obs}}; k_1, x, r, \chi; 
x_{\text{super}}, r_{\text{super}}, \chi_{\text{super}} \right).
\label{eq_super_calling}
\end{align}

{\bf Input parameters:}

-- $y_{\text{obs}}$ in $\Cc^n$: 

\quad the received message
(a noisy observation of $y(x_{\text{true}})$ defined via \eqref{eq_y_def}).

-- $k_1$ (a positive integer):

\quad the number of nearest neighbors to use.

-- $x$ in $X$:

\quad the candidate for $x_{\text{best}}$ 
(see \eqref{eq_RaDe1_calling}, \eqref{eq_RaDe2_calling}).

-- $r$ (a positive real number):

\quad the normalized squared norm of the noise 
(see \eqref{eq_RaDe1_calling}, \eqref{eq_RaDe2_calling}).

-- $\chi$ (a real number between 0 and 1):

\quad the confidence in $x$
(see \eqref{eq_RaDe1_calling}, \eqref{eq_RaDe2_calling}). \\

{\bf Output parameters:}

-- $x_{\text{super}}$ in $X$: 

\quad the improved candidate (see \eqref{eq_x_super}).

-- $r_{\text{super}}$ (a positive real number):

\quad the corresponding normalized squared norm of the noise 
      (see \eqref{eq_r_super}).

-- $\chi_{\text{super}}$ (a real number between 0 and 1):

\quad the confidence in $x_{\text{super}}$ (see \eqref{eq_chi_super}).

\subsubsection{Supercharging: cost and memory requirements}
\label{sec_super_cost}
The memory requirements and number of operations of supercharging
are given, respectively, 
by the formulae \eqref{eq_memory_nn}, \eqref{eq_cost_nn}
in Section~\ref{sec_nn_x}
(obviously, with $k$ being replaced by $k_1$).
 
However, often
this cost can be reduced roughly by the factor of $n$ as follows.
Suppose, for the sake of concreteness, that the constellation $C$
is given via \eqref{eq_running_c} in Section~\ref{sec_running}.
Then, every $x$ in $X$ has $2 \cdot n$ nearest neighbors 
$x_1,\dots,x_{2n}$ in $X$
all of which are at exactly the same distance from $x$.
Each of $x_1,\dots,x_{2n}$ differs from $x$ at precisely one coordinate;
in other words, for every $i=1,\dots,2n$, the difference
$x-x_i$ has only one non-zero coordinate. In particular, 
each of $r_i$ defined via \eqref{eq_ri_super} can be evaluated
in $O(n)$ rather than $O(n^2)$ operations. In other words,
if, for example,
\begin{align}
k_1 = 2 \cdot n,
\label{eq_k1_special}
\end{align}
then the total cost of supercharging is
\begin{align}
C_{\text{super}}(2 \cdot n) = O(n^2)
\label{eq_cost_super}
\end{align}
(rather than $O(n^3)$) operations.



\section{Numerical Results}
\label{sec_numerical}
In this section,
we illustrate the 
performance of the schemes from Section~\ref{sec_num_algo}
via several numerical experiments. All the calculations were
implemented in FORTRAN (the Lahey 95 LINUX version), and were
carried out in
double precision,
on a standard laptop computer
with DualCore CPU 2.53 GHz and 2.9GB RAM.

\subsection{Experiments: basic structure}
\label{sec_num_structure}

All the experiments are build around the model
described in Section~\ref{sec_task}. In addition,
we choose our parameters based on the running example
from Section~\ref{sec_running}.

Each experiment consists of five stages described below.

\subsubsection{Stage 1: preparation}
\label{sec_stage_1}

On this stage, we perform operations that do not
depend on the channel matrix $H$, let alone
the transmitted (or received) message.

We start by selecting the number
of transmitters and receivers $n$ 
(see \eqref{eq_running_n}), 
the constellation size $m$ 
(see \eqref{eq_running_m})
and the constellation $C$
(see \eqref{eq_running_c}).
In particular, the constellation $C$ 
is the same in all experiments, as is the collection $X$ of all
possible messages (see \eqref{eq_bigx_def}).

Next, we select the integer $k>0$ and find the list of $k$ nearest neighbors
of $x_0$ (see \eqref{eq_x_special}) in $X$ by brute force.
For any $x$ in $X$, this list allows us to compute 
its $k$ nearest neighbors in $O(k \cdot n)$ operations
(see Section~\ref{sec_nn_x}, in particular \eqref{eq_x_special_j}).

\subsubsection{Stage 2: $H$ and related quantities}
\label{sec_stage_2}

On this stage, we form the $n$ by $n$ complex matrix $H$
and carry out some computations that depend only on $H$.

First, we generate $H$
according to \eqref{eq_running_h} (i.e. by drawing $n^2$ independent
samples from $N\Cc(0,1)$ and using them as entries of $H$).

Then, we compute the singular value decomposition 
of $H$ (e.g. the matrices $U,\Sigma,V$ in the notation of
Theorem~\ref{thm_svd}). We use $\Sigma$ and $V$ to evaluate
the real numbers $S_{n-2}(1), \dots, S_{n-2}(n)$,
$S_{n-1}(1),\dots,S_{n-1}(n)$ (see \eqref{eq_skj_2})
and $s_{n-1}(1),\dots,s_{n-1}(n)$, $s_n(1),\dots,s_n(n)$
(see \eqref{eq_skj}). Also, we find the integer $1 \leq j \leq n$
that corresponds to the minimal $S_{n-1}(j)$ among
$S_{n-1}(1),\dots,S_{n-1}(n)$
(see the precomputation step in Section~\ref{sec_RaDe1_detailed}).
In addition, we find the integers $1 \leq j_1, j_2 \leq n$
such that $S_{n-2}(j_1)$ and $S_{n-2}(j_2)$ are the two smallest
values among $S_{n-2}(1), \dots, S_{n-2}(n)$
(see the precomputation step in Section~\ref{sec_RaDe2_detailed}).

Finally, we select the real number
$\sigma>0$ (roughly of the same order of magnitude
that the smallest singular value $\sigma_n$ of $H$:
see
Theorem~\ref{thm_svd} and Remark~\ref{rem_best}).

In Table~\ref{t:stage_12} 
we summarize the input and output parameters of all computations
described in Sections~\ref{sec_stage_1}, \ref{sec_stage_2}.

\begin{table}[htbp]
\begin{center}
\begin{tabular}{c|l}
Variable & Details \\
\hline
$n$            &  dimensionality of received/transmitted message 
                  \eqref{eq_running_n} \\
$m$            &  size of constellation \eqref{eq_running_m} \\
$C$            &  the complex constellation \eqref{eq_running_c} \\
$X$            &  the collection of all possible messages \eqref{eq_bigx_def} \\
$k$            &  the number of nearest neighbors \\
$\hat{x}_1,\dots,\hat{x}_k$ &
 the nearest neighbors of $x_0$ \eqref{eq_x_special} in $X$ \\
\hline
$H$            & the $n$ by $n$ complex channel matrix \eqref{eq_running_h} \\
$U,\Sigma,V$   & the SVD of $H$ (see Theorem~\ref{thm_svd}) \\
$v_1,\dots,v_n$& the columns of $V$ (vectors in $\Cc^n$) \\
$\sigma_1,\dots,\sigma_n$ & the singular values of $H$ 
(positive real numbers) \\
$s_{n-1}(1),\dots,s_{n-1}(n)$ & noise deviations, see \eqref{eq_skj} \\
$s_n(1),\dots,s_n(n)$         & noise deviations, see \eqref{eq_skj} \\
$S_{n-2}(1), \dots, S_{n-2}(n)$ & noise deviations, see \eqref{eq_skj_2} \\
$S_{n-1}(1),\dots,S_{n-1}(n)$ & noise deviations, see \eqref{eq_skj_2} \\
$j$ & the index of the minimal $S_{n-1}(j)$ \\
$j_1,j_2$ & the indices of the two smallest $S_{n-2}(j_1), S_{n-2}(j_2)$ \\
$\sigma$ & the deviation of noise in \eqref{eq_y_def} \\
\end{tabular}
\end{center}
\caption{\it Basic experiment: list of input variables}
\label{t:stage_12}
\end{table}

\subsubsection{Stage 3: generation of messages}
\label{sec_stage_3}
During this stage, we choose, more or less arbitrarily,
the integer $L>0$, and generate $L$ points
$x_{\text{true},1}, \dots, x_{\text{true},L}$ in $X$
by taking $L$ independent samples from the uniform distribution 
on $X$. (In the language of the model from
Section~\ref{sec_task}, each $x_{\text{true},i}$ represents
a message to be transmitted).

For each $i=1,\dots,L$, we generate $y_{\text{obs},i}$ from
$x_{\text{true},i}$ according to \eqref{eq_y_def} as follows:
we sample a complex normal standard random vector $\hat{z}_i$ 
in $\Cc^n$ (i.e. from the distribution $N\Cc(0_n,I_n)$)
and evaluate $y_{\text{obs},i}$ via the formula
\begin{align}
y_{\text{obs},i} = H \cdot x_{\text{true},i} + \sigma \cdot \hat{z}_i,
\label{eq_y_obs_i}
\end{align}
where $\sigma$ is that from Table~\ref{t:stage_12}. Needless to say,
all $\hat{z}_i$'s are sampled independently of each other.
In the language of the model from Section~\ref{sec_task},
each $y_{\text{obs},i}$ represents a received message corrupted by 
Gaussian noise of coordinate-wise standard deviation $\sigma$.

Our ultimate goal is to decode each received message $y_{\text{true},i}$
by computing the maximal likelihood estimate $x_{\text{best},i}$
to $x_{\text{true},i}$, in the sense of \eqref{eq_x_best}. We do so
by means of several numerical schemes from Section~\ref{sec_num_algo}.
Some of these schemes have a finite probability of failure
(e.g. they compute a candidate for $x_{\text{best},i}$ which
is equal to the latter with probability less than one).
Obviously, none of the schemes cannot use either $x_{\text{true},i}$
or $\hat{z}_i$ (e.g. both the original message and the noise
are assumed to be unknown). In other words, each scheme
is allowed to access only $y_{\text{obs},i}$ as well
as the variables from Table~\ref{t:stage_12}.

\subsubsection{Stage 4: decoding}
\label{sec_stage_4}

On this stage, we decode each of the messages 
$y_{\text{obs},1}, \dots, y_{\text{obs},L}$ by means of several schemes
from Section~\ref{sec_num_algo}.
More specifically, we use the following schemes:

{\bf 1.} Brute force (see Section~\ref{sec_brute_force}).

{\bf 2.} Nearest neighbors search in $X$ (see Section~\ref{sec_nn_x}).

{\bf 3.} Randomized Decoder 1 (see Section~\ref{sec_RaDe1}).

{\bf 4.} Randomized Decoder 2 (see Section~\ref{sec_RaDe2}).

Optionally, we improve the results RaDe1 (or RaDe2)
by supercharging (see Section~\ref{sec_super}).

In addition to $y_{\text{obs}_i}$ and the variables from 
Table~\ref{t:stage_12}, RaDe1 and RaDe2 
also receive several "tuning" parameters listed
in Table~\ref{t:some_params} below 
(see Sections~\ref{sec_RaDe1_full}, \ref{sec_RaDe2_full}
for details).

\begin{table}[htbp]
\begin{center}
\begin{tabular}{c|l}
Parameter & Details \\
\hline
$\chi_{\text{thresh}}$ & the threshold confidence 
(see Section~\ref{sec_RaDe1_detailed}, 
\ref{sec_RaDe2_detailed}) \\
min\_RaDe1 & minimal number of iterations of RaDe1 
(see Section~\ref{sec_RaDe1_full}) \\
max\_RaDe1 & maximal number of iterations of RaDe1 
(see Section~\ref{sec_RaDe1_full}) \\
\hline
$\chi_{\text{stop}}$ & 
the early exit confidence (see \eqref{eq_chi_stop}
in Section~\ref{sec_RaDe2_detailed}) \\
min\_RaDe2 & minimal number of iterations of RaDe2
(see Section~\ref{sec_RaDe2_full}) \\
max\_RaDe2 & maximal number of iterations of RaDe2
(see Section~\ref{sec_RaDe2_full}) \\
\hline
isuper & whether to perform supercharging (1) or not (0) \\
$k_1$ & the number of nearest neighbors in supercharging 
(see \eqref{eq_k1_special}).
\end{tabular}
\end{center}
\caption{\it Basic experiment: list of "tuning" parameters}
\label{t:some_params}
\end{table}
For every $i=1,\dots,L$, we proceed as follows: 

\vspace{0.05in}
{\bf 1.} Evaluate $x_{\text{best},i}$ from $y_{\text{obs},i}$ by
brute force (see Section~\ref{sec_brute_force}).

{\bf Comment.} This is the "correct answer" (see, however,
Remark~\ref{rem_best} in Section~\ref{sec_brute_force}).

\vspace{0.05in}
{\bf 2.} Evaluate $\tilde{x}_{\text{obs},i}$ from
$y_{\text{obs},i}$ via \eqref{eq_tilde_x_obs}.

\vspace{0.05in}
{\bf 3.} Evaluate $x_{\text{nn}(k),i}$ from $y_{\text{obs},i}$
via \eqref{eq_x_nn} in
Section~\ref{sec_nn_x} by using $k$ nearest neighbors.

\vspace{0.05in}
{\bf 4.} Evaluate $x_{\text{RaDe1},i}$ from $y_{\text{obs},i}$
via RaDe1 search
(see Section~\ref{sec_RaDe1_full}).

\vspace{0.05in}
{\bf 5.} Evaluate $x_{\text{RaDe2},i}$ from $y_{\text{obs},i}$
via RaDe2 search
(see Section~\ref{sec_RaDe2_full}).

\vspace{0.05in} 
{\bf 6.} If $\text{isuper}=1$, evaluate $x_{\text{super},i}$ 
from the best
of $x_{\text{RaDe1},i}, x_{\text{RaDe2},i}$
by supercharging
(see \eqref{eq_best_two}, \eqref{eq_decode_super} below,
and also
Section~\ref{sec_super}).

In other words, for every $i=1,\dots,L$, we execute
\begin{align}
& \text{brute}(y_{\text{obs},i}; x_{\text{best},i}) \nonumber \\
& \text{nnx}
(y_{\text{obs},i}; k; \tilde{x}_{\text{obs},i}, x_{\text{nn}(k),i})
\nonumber \\
& \text{RaDe1\_all}\left( y_{\text{obs},i}; 
\text{min\_RaDe1}, \text{max\_RaDe1}, \chi_{\text{thresh}};
x_{\text{RaDe1},i}, r_{\text{RaDe1},i}, \chi_{\text{RaDe1},i}\right)
\nonumber \\
& \text{RaDe2\_all}\left( y_{\text{obs},i}; 
\text{min\_RaDe2}, \text{max\_RaDe2}, \chi_{\text{thresh}},
\chi_{\text{stop}};
x_{\text{RaDe2},i}, r_{\text{RaDe2},i}, \chi_{\text{RaDe2},i}\right)
\label{eq_decode_all}
\end{align}
(see \eqref{eq_brute_calling},
\eqref{eq_nnx_calling},
\eqref{eq_RaDe1_all_calling},
\eqref{eq_RaDe2_all_calling}).
In addition, if $\text{isuper} = 1$, we define
$x_i$ in $X$ and the real numbers $r_i, \chi_i$ via the formula
\begin{align}
(x_i,r_i,\chi_i) = \begin{cases}
\left(
x_{\text{RaDe1},i}, r_{\text{RaDe1},i}, \chi_{\text{RaDe1},i}
\right) & 
\text{ if } r_{\text{RaDe1},i} < r_{\text{RaDe2},i}, \\
\left(
x_{\text{RaDe2},i}, r_{\text{RaDe2},i}, \chi_{\text{RaDe2},i}
\right) & 
\text{ otherwise}
\end{cases}
\label{eq_best_two}
\end{align}
and execute
\begin{align}
\text{super}\left( y_{\text{obs},i}; k_1, x_i, r_i, \chi_i; 
x_{\text{super},i}, r_{\text{super},i}, \chi_{\text{super},i} \right)
\label{eq_decode_super}
\end{align}
(see \eqref{eq_super_calling}).

\subsubsection{Stage 5: evaluation of statistics}
\label{sec_stage_5}

In Section~\ref{sec_stage_4} above, we describe
the decoding of a single received message
by means of several schemes.
In this section,
we describe a way to
compare the performance of these schemes via the evaluation
of various statistics
(essentially, via Monte Carlo simulations). 
Briefly speaking, we evaluate how much
time, on average, it took for each algorithm to compute its output,
and in what proportion of cases the output was correct.

Each statistic is evaluated for every $i=1,\dots,L$. In addition,
we also average over several matrices $H$
(corresponding to the same set of parameters).

\subsection{Experiment 1}
\label{sec_exp_1}

{\bf Description.}
In this experiment, we proceed as follows. For each $n=6,7,8$, 
we generate five $n$ by $n$ complex matrices $H$ (see
Section~\ref{sec_stage_2} above). For each such matrix
and each $\sigma = 0.25, 0.5, 0.75, 1, 1.25$, 
we generate $L=1000$ messages $x_{\text{true}}$ in $X$, and for
each such $x_{\text{true}}$ we generate $y_{\text{obs}}$ as
described in Section~\ref{sec_stage_3}. For each such $y_{\text{obs}}$,
we evaluate the maximal likelihood estimate $x_{\text{best}}$ of
$x_{\text{true}}$ via the brute force algorithm from
Section~\ref{sec_brute_force} (see also Section~\ref{sec_stage_4}).

Thus, for each $n=6,7,8$ and each $\sigma=0.25,0.5,0.75,1,1.25$,
we obtain 5,000 pairs $(x_{\text{true}}, v_{\text{best}})$.
We evaluate the proportion of cases when $x_{\text{true}}$
is equal to $x_{\text{best}}$. The results of this experiment
are displayed in Table~\ref{t:test21}.

\begin{table}[htbp]
\begin{center}
\begin{tabular}{c|c|c|c|c|c}
\multicolumn{1}{c|}{} & 
\multicolumn{5}{c}{$\sigma$} \\
\hline
$n$ & 0.25 & 0.5 & 0.75 & 1 & 1.25 \\
\hline
6 &  0.99840E+00 &  0.83740E+00 & 0.44200E+00 & 0.17900E+00 & 0.74800E-01 \\
7 &  0.99940E+00 &  0.87960E+00 & 0.49780E+00 & 0.20440E+00 & 0.77200E-01 \\
8 &  0.10000E+01 &  0.90900E+00 & 0.45600E+00 & 0.15220E+00 & 0.49000E-01 \\
\end{tabular}
\end{center}
\caption{\it Average proportion of $x_{\text{true}} = x_{\text{best}}$, five
channel matrices $H$ per each $n$. The number of messages is $L=1,000$
(per matrix). See Section~\ref{sec_exp_1}.}
\label{t:test21}
\end{table}

{\bf Observations.} Several observations can be made from
Table~\ref{t:test21}.

{\bf 1.} For each $n$, the average proportion of 
$x_{\text{best}} = x_{\text{in}}$ decreases monotonically with $\sigma$,
as expected (the larger $\sigma$ is the more likely $x_{\text{best}}$
is to be different from $x_{\text{true}}$). For example,
for $\sigma = 0.25$ this proportion is above $0.99$ for all $n=6,7,8$,
while for $\sigma = 1.25$ it might be as low as $0.05$ (for $n=8$).
See also Remark~\ref{rem_best} in Section~\ref{sec_task}.

{\bf 2.} In the view of the previous observation, for each
$n$ the values of $\sigma$ vary from that corresponding
to a relatively "easy" decoding task ($\sigma=0.25$)
to that corresponding to a 
"difficult but possible" decoding task ($\sigma=1.25$),
in the sense of Section~\ref{sec_task}.

Motivated by Observation 2, we will select the same values
of $\sigma$ in all numerical experiments below,
to investigate how the performance of various numerical schemes
for the solution of the decoding task depends on the 
standard deviation of noise and dimensionality of the matrix.

\subsection{Experiment 2}
\label{sec_exp_2}

In this experiment, we investigate the performance
of the 
nearest neighbors search in $X$
(see Section~\ref{sec_nn_x}).

{\bf Description.} For each $n=6,7,8$ and each $\sigma=0.25,0.5,0.75,1,1.25$,
we proceed as in Experiment 1 from Section~\ref{sec_exp_1}. Then,
for each received message $y_{\text{obs}}$ and each 
$k=1,2n+1,2n^2+1,n^3,n^4,n^5+1$,
we evaluate the estimate $x_{\text{nn}(k)}$ of $x_{\text{true}}$
via the algorithm from Section~\ref{sec_nn_x}
(see \eqref{eq_nnx_calling}). Then, among all cases when
$x_{\text{true}} = x_{\text{best}}$, we evaluate the proportion
of cases when the estimate $x_{\text{nn}(k)}$ is correct; in other words,
we compute the number $\text{prop}_{\text{nn}}(n,\sigma,k)$
defined via the formula
\begin{align}
\text{prop}_{\text{nn}}(n,\sigma,k) =
\frac{ \# \left\{ x_{\text{best}} = x_{\text{true}} = 
                  x_{\text{nn}(k)} \right\} }
     { \# \left\{ x_{\text{best}} = x_{\text{true}} \right\} }.
\label{eq_prop_nn_x}
\end{align}
As in Experiment 1 from Section~\ref{sec_exp_1}, to evaluate
each such proportion we use 5,000 messages 
(five matrices per $n$, $L=1,000$ messages per matrix).
\begin{remark}
According to \eqref{eq_prop_nn_x}, in this experiment we evaluate the
performance of the scheme only on those messages $x_{\text{true}}$
for which the maximal likelihood estimate $x_{\text{best}}$
coincides with $x_{\text{true}}$. 
\label{rem_exp_2}
\end{remark}

{\bf Tables.} The results of Experiment 2 are displayed
in Tables~\ref{t:test22_6}--\ref{t:test22_8}. These tables
correspond, respectively, to $n=6,7,8$, 
and contain $\text{prop}_{\text{nn}}(n,\sigma,k)$
defined via \eqref{eq_prop_nn_x} above
(in each table, rows correspond to the values of $\sigma$
and columns correspond to the values of $k$).

In addition, for each $n$ and $k$, we measure the CPU time
required by this scheme to evaluate $x_{\text{nn}(k)}$ for
$L=1,000$ messages (obviously, this CPU time essentially does not
depend on $\sigma$: see Section~\ref{sec_nn_x}). These CPU times
are listed in the last five columns of Table~\ref{t:test23_24}.
Each row of this table corresponds to $n=6,7,8$,
while the last five columns of this table correspond to
$k=2n, 2n^2, n^3, n^4, n^5$, respectively.

\begin{table}[htbp]
\begin{center}
\begin{tabular}{c|c|c|c|c|c|c}
\multicolumn{1}{c|}{} & 
\multicolumn{6}{c}{$k$} \\
\hline
$\sigma$ & 1 & $2n+1$ & $2n^2+1$ & $n^3$ & $n^4$ & $n^5+1$ \\
\hline
0.25 &  0.659E+00 &  0.825E+00 &  0.906E+00 &  0.939E+00 &  0.973E+00 &  0.991E+00 \\
0.5 &  0.242E+00 &  0.458E+00 &  0.645E+00 &  0.741E+00 &  0.852E+00 &  0.922E+00 \\
0.75 &  0.108E+00 &  0.281E+00 &  0.447E+00 &  0.559E+00 &  0.717E+00 &  0.862E+00 \\ 
1 &  0.693E-01 &  0.194E+00 &  0.362E+00 &  0.468E+00 &  0.660E+00 &  0.812E+00 \\
1.25 &  0.374E-01 &  0.144E+00 &  0.286E+00 &  0.385E+00 &  0.596E+00 &  0.805E+00 \\
\end{tabular}
\end{center}
\caption{\it Success rate of nearest neighbors search in $X$
(see \eqref{eq_prop_nn_x} in Section~\ref{sec_exp_2}).
Corresponds to Experiment 2 with $n=6$.
}
\label{t:test22_6}
\end{table}

\begin{table}[htbp]
\begin{center}
\begin{tabular}{c|c|c|c|c|c|c}
\multicolumn{1}{c|}{} & 
\multicolumn{6}{c}{$k$} \\
\hline
$\sigma$ & 1 & $2n+1$ & $2n^2+1$ & $n^3$ & $n^4$ & $n^5+1$ \\
\hline
0.25 &  0.603E+00 &  0.794E+00 &  0.897E+00 &  0.926E+00 &  0.971E+00 &  0.993E+00 \\
0.5 &  0.226E+00 &  0.418E+00 &  0.568E+00 &  0.658E+00 &  0.808E+00 &  0.917E+00 \\
0.75 &  0.996E-01 &  0.236E+00 &  0.374E+00 &  0.483E+00 &  0.648E+00 &  0.796E+00 \\
1 &  0.489E-01 &  0.139E+00 &  0.262E+00 &  0.359E+00 &  0.524E+00 &  0.679E+00 \\
1.25 &  0.311E-01 &  0.907E-01 &  0.184E+00 &  0.251E+00 &  0.425E+00 &  0.598E+00 \\
\end{tabular}
\end{center}
\caption{\it Success rate of nearest neighbors search in $X$
(see \eqref{eq_prop_nn_x} in Section~\ref{sec_exp_2}).
Corresponds to Experiment 2 with $n=7$.
}
\label{t:test22_7}
\end{table}

\begin{table}[htbp]
\begin{center}
\begin{tabular}{c|c|c|c|c|c|c}
\multicolumn{1}{c|}{} & 
\multicolumn{6}{c}{$k$} \\
\hline
$\sigma$ & 1 & $2n+1$ & $2n^2+1$ & $n^3$ & $n^4$ & $n^5+1$ \\
\hline
0.25 &  0.430E+00 &  0.569E+00 &  0.655E+00 &  0.699E+00 &  0.765E+00 &  0.821E+00 \\
0.5 &  0.112E+00 &  0.244E+00 &  0.369E+00 &  0.450E+00 &  0.566E+00 &  0.666E+00 \\
0.75 &  0.298E-01 &  0.101E+00 &  0.197E+00 &  0.287E+00 &  0.442E+00 &  0.575E+00 \\
1 &  0.920E-02 &  0.420E-01 &  0.125E+00 &  0.192E+00 &  0.311E+00 &  0.466E+00 \\
1.25 &  0.816E-02 &  0.245E-01 &  0.571E-01 &  0.106E+00 &  0.196E+00 &  0.363E+00 \\
\end{tabular}
\end{center}
\caption{\it Success rate of nearest neighbors search in $X$
(see \eqref{eq_prop_nn_x} in Section~\ref{sec_exp_2}).
Corresponds to Experiment 2 with $n=8$.
}
\label{t:test22_8}
\end{table}

\begin{table}[htbp]
\begin{center}
\begin{tabular}{c|c|c|c|c|c|c|c}
\multicolumn{1}{c|}{} & 
\multicolumn{1}{c|}{} & 
\multicolumn{1}{c|}{} & 
\multicolumn{4}{c}{$k$} \\
\hline
$n$ & RaDe1 & RaDe2    & $2n$ & $2n^2$ & $n^3$ & $n^4$ & $n^5$ \\
\hline
6 & 0.46E-1 & 0.14E+0 & 0.20E-1 & 0.75E-1 & 0.22E+0 & 0.13E+1 & 0.76E+1 \\
7 & 0.50E-1 & 0.12E+0 & 0.26E-1 & 0.14E+0 & 0.40E+0 & 0.27E+1 & 0.19E+2 \\
8 & 0.54E-1 & 0.15E+0 & 0.33E-1 & 0.27E+0 & 0.80E+0 & 0.56E+1 & 0.45E+2 \\
\end{tabular}
\end{center}
\caption{\it CPU time of various decoding schemes 
(in seconds). The number of messages is $L=1000$.
For RaDe1 and RaDe2, we used $\sigma=0.75$.}
\label{t:test23_24}
\end{table}

{\bf Observations.} We make the following observations from
Tables~\ref{t:test22_6}--\ref{t:test22_8}.

{\bf 1.} As expected, for each $n$ and $\sigma$ the performance
of the scheme improves as $k$ increases (obviously, this
improvement comes at the cost of additional CPU time). For example,
for $n=8$ and $\sigma=0.75$  the proportion of correct 
"guesses" grows from about 2\% to about 58\%
as $k$ increases from 1 to $n^5$.

{\bf 2.} As expected, for each $n$ and $k$ the performance of the
scheme deteriorates as $\sigma$ increases (noise of a larger
standard deviation makes
the decoding task more difficult).

{\bf 3.} Typically, for each $\sigma$ and comparable values of $k$,
the performance of the scheme deteriorates as $n$ increases.

{\bf 4.} When $n=8$ and $\sigma = 1, 1.25$, the scheme returns
the correct answer in less than half of all cases for all values
of $k$ in Table~\ref{t:test22_8}. In other words, when, for 
example, $\sigma=1$, in more than half of all cases
$x_{\text{true}}$ is not among as many as $32,000$
nearest neighbors of $H^{-1} \cdot y_{\text{obs}}$ in $X$,
even when $x_{\text{true}} = x_{\text{best}}$
(see \eqref{eq_tilde_x_obs} in Section~\ref{sec_nn_x}).

Some additional observations can be made from Table~\ref{t:test23_24}.

{\bf 5.} Naturally, the CPU time of nearest neighbor search in $X$
scales roughly as one would expect from
 \eqref{eq_cost_nn} in Section~\ref{sec_nn_x}.

{\bf 6.} Typically, even a noticeable increase in the  CPU time
(e.g. taking a larger $k$) results 
in a fairly modest improvement in performance.
For example, when $n=8$ and $\sigma=1$, using $k=8^4$ nearest neighbors
to decode the message results in about 31\% of correct guesses
(and it takes about 6 seconds per 1000 messages). Increasing
the number of nearest neighbors by a factor of eight allows
one to determine about half as many additional messages correctly
(e.g. about 46\% overall),
while the CPU time goes up to about 45 seconds.

\subsection{Experiment 3}
\label{sec_exp_3}

In this experiment, we investigate the performance
of Randomized Decoder 1
(see Section~\ref{sec_RaDe1}).

{\bf Description.} For each $n=6,7,8$ and each $\sigma=0.25,0.5,0.75,1,1.25$,
we proceed as in Experiment 1 (see Section~\ref{sec_exp_1}). 
Then, for each received message $y_{\text{obs}}$ and each $T=1,\dots,7$,
we evaluate the estimate $x_{\text{RaDe1}}(T)$ of $x_{\text{true}}$
from $y_{\text{obs}}$
via RaDe1 (see Section~\ref{sec_RaDe1}), with the following parameters
(see \eqref{eq_RaDe1_calling} in Section~\ref{sec_RaDe1_full},
and also Table~\ref{t:some_params} in Section~\ref{sec_stage_3}):
\begin{align}
& \text{min\_RaDe1} = T, \nonumber \\
& \text{max\_RaDe1} = T, \nonumber \\
& \chi_{\text{thresh}}: \text{not applicable}
\label{eq_exp_3_params}
\end{align}
(the parameter $\chi_{\text{thresh}}$ is not applicable since
 $\text{min\_RaDe1} = \text{max\_RaDe1} = T$, e.g. the number
of basic iterations is always exactly $T$). Then, 
we define the integer $k_1$ via the formula
\begin{align}
k_1 = 2 \cdot n^2.
\label{eq_exp_3_k1}
\end{align}
and evaluate
the estimate $x_{\text{super}}(T)$ of $x_{\text{true}}$
from $y_{\text{obs}}$ via RaDe1
(where the parameters are defined via \eqref{eq_exp_3_params})
followed by supercharging (see Section~\ref{sec_super},
in particular \eqref{eq_super_calling}); during the supercharging
step, $k_1$ nearest neighbors are used. In other words,
$x_{\text{RaDe1}}(T)$ and $x_{\text{super}}(T)$ are obtained 
from $y_{\text{obs}}$ via
calling
\begin{align}
& 
\text{RaDe1\_search}\left( y_{\text{obs}}; x_{\text{RaDe1}}(T), r, \chi\right)
\nonumber \\
& 
\text{RaDe1\_search}\left( y_{\text{obs}}; x, r, \chi\right)
\nonumber \\
&
\text{super}\left( y_{\text{obs}}; k_1, x, r, \chi; 
x_{\text{super}}(T), r_{\text{super}}, \chi_{\text{super}} \right)
\label{eq_exp_3_calling}
\end{align}
(see \eqref{eq_RaDe1_calling}, \eqref{eq_super_calling}),
where the parameters are defined via \eqref{eq_exp_3_params}
and \eqref{eq_exp_3_k1}.

Thus, for each $n$ and $\sigma$, we have 5,000 messages $y_{\text{obs}}$
(five matrices per $n$,
$L=1,000$ messages per matrix,
as in Experiments 1,2 from Sections~\ref{sec_exp_1}, \ref{sec_exp_2});
for each such message, we obtain 14 estimates
$x_{\text{RaDe1}}(1), \dots, x_{\text{RaDe1}}(7)$ and
$x_{\text{super}}(1), \dots, x_{\text{super}}(7)$
of $x_{\text{true}}$. For each such estimate, we evaluate
the proportion of the cases in which this estimate is correct
(provided that $x_{\text{true}} = x_{\text{best}}$, see
Section~\ref{sec_exp_1}). In other words, we compute
$\text{prop}_{\text{RaDe1}}(n,\sigma,T)$ and
$\text{prop}_{\text{super}}(n,\sigma,T)$ via the formulae
\begin{align}
\text{prop}_{\text{RaDe1}}(n,\sigma,T) =
\frac{ \# \left\{ x_{\text{best}} = x_{\text{true}} = 
                  x_{\text{RaDe1}(T)} \right\} }
     { \# \left\{ x_{\text{best}} = x_{\text{true}} \right\} }
\label{eq_exp_3_prop_RaDe1}
\end{align}
and
\begin{align}
\text{prop}_{\text{super}}(n,\sigma,T) =
\frac{ \# \left\{ x_{\text{best}} = x_{\text{true}} = 
                  x_{\text{super}(T)} \right\} }
     { \# \left\{ x_{\text{best}} = x_{\text{true}} \right\} },
\label{eq_exp_3_prop_super}
\end{align}
respectively (see also Remark~\ref{rem_exp_2} in Section~\ref{sec_exp_2}).

{\bf Tables.} The results of Experiment 3 are displayed
in Table~\ref{t:test25}. 
In this table, the rows correspond to all pairs of $n=6,7,8$
and $T=1,\dots,7$, and the columns correspond to 
$\sigma=0.25,0.5,0.75,1,1.25$. For each $n,\sigma,T$, the
table contains two entries that appear one under the other: 
$\text{prop}_{\text{RaDe1}}(n,\sigma,T)$
(above)
and $\text{prop}_{\text{super}}(n,\sigma,T)$ (below);
see \eqref{eq_exp_3_prop_RaDe1},
\eqref{eq_exp_3_prop_super}. For example, for $n=6$, $T=1$ and $\sigma=0.5$,
these proportions are equal, respectively, to $0.211$ and $0.411$.

In addition, for each $n=6,7,8$ and for $\sigma=0.75$, we measure the CPU time
required to process 
$L=1,000$ messages
by a single iteration of this algorithm.
These CPU times
are listed in the second column of Table~\ref{t:test23_24}.

\begin{table}[htbp]
\begin{center}
\begin{tabular}{c|c|c|c|c|c|c}
\multicolumn{1}{c|}{} & 
\multicolumn{1}{c|}{} & 
\multicolumn{5}{c}{$\sigma$} \\
\hline
$n$ & min\_RaDe1 & 0.25 & 0.5 & 0.75 & 1 & 1.25 \\
 \hline
6 & 1 &  0.587E+00 &  0.211E+00 &  0.103E+00 &  0.737E-01 &  0.588E-01 \\
 & &  0.749E+00 &  0.411E+00 &  0.268E+00 &  0.219E+00 &  0.168E+00 \\
 \hline
6 & 2 &  0.737E+00 &  0.342E+00 &  0.185E+00 &  0.128E+00 &  0.802E-01 \\
 & &  0.843E+00 &  0.549E+00 &  0.381E+00 &  0.316E+00 &  0.222E+00 \\
 \hline
6 & 3 &  0.791E+00 &  0.427E+00 &  0.267E+00 &  0.207E+00 &  0.142E+00 \\
 & &  0.879E+00 &  0.620E+00 &  0.465E+00 &  0.404E+00 &  0.337E+00 \\
 \hline
6 & 4 &  0.836E+00 &  0.478E+00 &  0.297E+00 &  0.212E+00 &  0.160E+00 \\
 & &  0.904E+00 &  0.667E+00 &  0.502E+00 &  0.426E+00 &  0.353E+00 \\
 \hline
6 & 5 &  0.858E+00 &  0.509E+00 &  0.343E+00 &  0.258E+00 &  0.198E+00 \\
 & &  0.922E+00 &  0.688E+00 &  0.545E+00 &  0.472E+00 &  0.422E+00 \\
 \hline
6 & 6 &  0.878E+00 &  0.554E+00 &  0.355E+00 &  0.296E+00 &  0.222E+00 \\
 & &  0.932E+00 &  0.719E+00 &  0.558E+00 &  0.473E+00 &  0.409E+00 \\
 \hline
6 & 7 &  0.892E+00 &  0.582E+00 &  0.374E+00 &  0.317E+00 &  0.225E+00 \\
 & &  0.940E+00 &  0.735E+00 &  0.574E+00 &  0.494E+00 &  0.404E+00 \\
 \hline
 \hline
7 & 1 &  0.678E+00 &  0.310E+00 &  0.155E+00 &  0.900E-01 &  0.622E-01 \\
 & &  0.839E+00 &  0.526E+00 &  0.334E+00 &  0.227E+00 &  0.174E+00 \\
 \hline
7 & 2 &  0.793E+00 &  0.419E+00 &  0.242E+00 &  0.155E+00 &  0.137E+00 \\
 & &  0.902E+00 &  0.631E+00 &  0.465E+00 &  0.325E+00 &  0.267E+00 \\
 \hline
7 & 3 &  0.844E+00 &  0.487E+00 &  0.293E+00 &  0.182E+00 &  0.137E+00 \\
 & &  0.929E+00 &  0.689E+00 &  0.514E+00 &  0.366E+00 &  0.324E+00 \\
 \hline
7 & 4 &  0.875E+00 &  0.541E+00 &  0.348E+00 &  0.245E+00 &  0.192E+00 \\
 & &  0.942E+00 &  0.731E+00 &  0.562E+00 &  0.444E+00 &  0.373E+00 \\
 \hline
7 & 5 &  0.903E+00 &  0.576E+00 &  0.364E+00 &  0.261E+00 &  0.199E+00 \\
 & &  0.955E+00 &  0.756E+00 &  0.575E+00 &  0.446E+00 &  0.376E+00 \\
 \hline
7 & 6 &  0.910E+00 &  0.600E+00 &  0.400E+00 &  0.273E+00 &  0.210E+00 \\
 & &  0.959E+00 &  0.771E+00 &  0.605E+00 &  0.465E+00 &  0.376E+00 \\
 \hline
7 & 7 &  0.924E+00 &  0.624E+00 &  0.426E+00 &  0.307E+00 &  0.220E+00 \\
 & &  0.964E+00 &  0.790E+00 &  0.619E+00 &  0.513E+00 &  0.412E+00 \\
 \hline
 \hline
8 & 1 &  0.757E+00 &  0.288E+00 &  0.129E+00 &  0.631E-01 &  0.367E-01 \\
 & &  0.900E+00 &  0.514E+00 &  0.296E+00 &  0.184E+00 &  0.147E+00 \\
 \hline
8 & 2 &  0.845E+00 &  0.393E+00 &  0.186E+00 &  0.105E+00 &  0.653E-01 \\
 & &  0.941E+00 &  0.617E+00 &  0.376E+00 &  0.247E+00 &  0.188E+00 \\
 \hline
8 & 3 &  0.883E+00 &  0.446E+00 &  0.233E+00 &  0.122E+00 &  0.898E-01 \\
 & &  0.959E+00 &  0.658E+00 &  0.425E+00 &  0.263E+00 &  0.216E+00 \\
 \hline
8 & 4 &  0.902E+00 &  0.475E+00 &  0.250E+00 &  0.154E+00 &  0.114E+00 \\
 & &  0.965E+00 &  0.683E+00 &  0.457E+00 &  0.302E+00 &  0.257E+00 \\
 \hline
8 & 5 &  0.910E+00 &  0.508E+00 &  0.278E+00 &  0.152E+00 &  0.102E+00 \\
 & &  0.965E+00 &  0.697E+00 &  0.475E+00 &  0.327E+00 &  0.282E+00 \\
 \hline
8 & 6 &  0.920E+00 &  0.539E+00 &  0.290E+00 &  0.173E+00 &  0.114E+00 \\
 & &  0.968E+00 &  0.719E+00 &  0.486E+00 &  0.357E+00 &  0.278E+00 \\
 \hline
8 & 7 &  0.931E+00 &  0.546E+00 &  0.300E+00 &  0.184E+00 &  0.114E+00 \\
 & &  0.974E+00 &  0.729E+00 &  0.493E+00 &  0.350E+00 &  0.294E+00 \\
\end{tabular}
\end{center}
\caption{\it Success rate of RaDe1, 
with and without supercharging ($k_1=2n^2$). 
Corresponds to Experiment 3 in Section~\ref{sec_exp_3}.
}
\label{t:test25}
\end{table}

{\bf Observations.}
The following observations can be made from Table~\ref{t:test25}
and some additional experiments by the authors.

{\bf 1.} Not surprisingly, for each $n$ and $\sigma$ the performance
of the scheme improves as the number of iterations $T$ increases 
(obviously, this
improvement comes at the cost of additional CPU time).
See, however, Remark~\ref{rem_RaDe1_stuck} below.

{\bf 2.} As expected, for each $n$, $\sigma$ and $T$, the performance
of the scheme with supercharging is better than without supercharging
(again, at the cost of additional CPU time).

{\bf 3.} The effects of supercharging are generally rather negligible
when the proportion of "correct guesses" is already high. For example,
when $n=8$, $\sigma=0.25$ and $T=7$, supercharging improves the success
rate from 93.1\% to 97.4\%. On the other hand, the improvement
might be quite noticeable when the success rate is relatively low.
For example, when $n=8$, $\sigma=0.75$ and $T=7$, supercharging
improves the success rate from 30\% to 49\%.
 
{\bf 4.} As expected, for each $n$ and $T$ the performance of the
scheme deteriorates as $\sigma$ increases (noise of a larger
standard deviation makes
the decoding task more difficult).
\begin{remark}
Additional numerical experiments 
seem to indicate that, for sufficiently large noise, 
the success rate
of RaDe1 does not approach 100\% 
as the number of iterations increases to some reasonable value
(e.g. $T = 100$), but seems to be stuck at some
intermediate value (e.g. 70\%). This phenomenon is likely to be related
to the fact that RaDe1 is not fully randomized, i.e.
the parameter $j$ from Table~\ref{t:stage_12} 
(see also Section~\ref{sec_RaDe1_informal}) is fixed (and is determined
by $H$). To get rid of this undesirable feature, in the future
we will investigate the possibility of choosing $j$ at random,
probably taking into account the values $S_{n-1}(j)$
(see \eqref{eq_skj_2} in Section~\ref{sec_RaDe1_overview}, 
and also Table~\ref{t:stage_12}).
\label{rem_RaDe1_stuck}
\end{remark}
Some additional observations can be made from Table~\ref{t:test23_24}.

{\bf 5.} A single iteration of RaDe1
(see Section~\ref{sec_RaDe1_detailed}) is faster than the nearest search
in $X$ with $k=2 \cdot n^2$ (see Section~\ref{sec_nn_x}) by
a factor of about 1.6 for $n=6$ and by a factor of
about 5 for $n=8$. Needless to say, for larger values of $k$ the
difference in CPU times is even more significant. For example, 
for $n=8$ and $k=n^5$ the nearest neighbors search in $X$
is about 800 times slower.

{\bf 6.} Since the CPU time required by a 
a nearest neighbors search in $X$ with $k=2 \cdot n^2$ is of the same
order of magnitude as that required by a single
iteration of RaDe1,
it might make sense to perform fewer iterations of the basic
scheme followed by a supercharging rather than
performing more iterations of the basic scheme without supercharging
(see also Table~\ref{t:test25}).

\subsection{Experiment 4}
\label{sec_exp_4}

In this experiment, we investigate the performance
of Randomized Decoder 2 (see Section~\ref{sec_RaDe2}).

{\bf Description.} 
This experiment is similar to that described in Section~\ref{sec_exp_3},
with the difference that instead of RaDe1 (see
Section~\ref{sec_RaDe1}) we use RaDe2 (see Section~\ref{sec_RaDe2}).
More specifically,
for each $n=6,7,8$ and each $\sigma=0.25,0.5,0.75,1,1.25$,
we proceed as in Experiment 1 from Section~\ref{sec_exp_1}. 
Then, for each received message $y_{\text{obs}}$ and each $T=1,\dots,7$,
we evaluate the estimate $x_{\text{RaDe2}}(T)$ of $x_{\text{true}}$
from $y_{\text{obs}}$
via RaDe2, with the following parameters
(see 
\eqref{eq_chi_stop}, \eqref{eq_RaDe2_calling} in Section~\ref{sec_RaDe2_full},
and also Table~\ref{t:some_params} in Section~\ref{sec_stage_3}):
\begin{align}
& \text{min\_RaDe2} = T, \nonumber \\
& \text{max\_RaDe2} = T, \nonumber \\
& \chi_{\text{stop}} = 10^{-3}, \nonumber \\
& \chi_{\text{thresh}}: \text{not applicable}
\label{eq_exp_4_params}
\end{align}
(the parameter $\chi_{\text{thresh}}$ is not applicable since
 $\text{min\_RaDe2} = \text{max\_RaDe2} = T$, e.g. the number
of basic iterations is always exactly $T$). Then, 
we define the integer $k_1$ via \eqref{eq_exp_3_k1} above,
and evaluate
the estimate $x_{\text{super}}(T)$ of $x_{\text{true}}$
from $y_{\text{obs}}$ via RaDe2
(where the parameters are defined via \eqref{eq_exp_4_params})
followed by supercharging (see Section~\ref{sec_super},
in particular \eqref{eq_super_calling}); during the supercharging
step, $k_1$ nearest neighbors are used. In other words,
$x_{\text{RaDe2}}(T)$ and $x_{\text{super}}(T)$ are obtained 
from $y_{\text{obs}}$ via
calling
\begin{align}
& 
\text{RaDe2\_search}\left( y_{\text{obs}}; x_{\text{RaDe2}}(T), r, \chi\right)
\nonumber \\
& 
\text{RaDe2\_search}\left( y_{\text{obs}}; x, r, \chi\right)
\nonumber \\
&
\text{super}\left( y_{\text{obs}}; k_1, x, r, \chi; 
x_{\text{super}}(T), r_{\text{super}}, \chi_{\text{super}} \right)
\label{eq_exp_4_calling}
\end{align}
(see \eqref{eq_RaDe2_calling}, \eqref{eq_super_calling}),
where the parameters are defined via \eqref{eq_exp_4_params}
and \eqref{eq_exp_3_k1}.

Thus, for each $n$ and $\sigma$, we have 5,000 messages $y_{\text{obs}}$
(five matrices per $n$,
$L=1,000$ messages per matrix,
as in Experiments 1,2 from Sections~\ref{sec_exp_1}, \ref{sec_exp_2},
\ref{sec_exp_3});
for each such message, we obtain 14 estimates
$x_{\text{RaDe2}}(1), \dots, x_{\text{RaDe2}}(7)$ and
$x_{\text{super}}(1), \dots, x_{\text{super}}(7)$
of $x_{\text{true}}$. For each such estimate, we evaluate
the proportion of the cases in which this estimate is correct
(provided that $x_{\text{true}} = x_{\text{best}}$, see
Section~\ref{sec_exp_1}). In other words, we compute
$\text{prop}_{\text{RaDe2}}(n,\sigma,T)$ and
$\text{prop}_{\text{super}}(n,\sigma,T)$ via the formulae
\begin{align}
\text{prop}_{\text{RaDe2}}(n,\sigma,T) =
\frac{ \# \left\{ x_{\text{best}} = x_{\text{true}} = 
                  x_{\text{RaDe2}(T)} \right\} }
     { \# \left\{ x_{\text{best}} = x_{\text{true}} \right\} }
\label{eq_exp_4_prop_RaDe2}
\end{align}
and
\begin{align}
\text{prop}_{\text{super}}(n,\sigma,T) =
\frac{ \# \left\{ x_{\text{best}} = x_{\text{true}} = 
                  x_{\text{super}(T)} \right\} }
     { \# \left\{ x_{\text{best}} = x_{\text{true}} \right\} },
\label{eq_exp_4_prop_super}
\end{align}
respectively (see also Remark~\ref{rem_exp_2} in Section~\ref{sec_exp_2}).

{\bf Tables.} The results of Experiment 4 are displayed
in Table~\ref{t:test26}, whose structure is similar
to that of Table~\ref{t:test25}. More specifically, 
in this table, the rows correspond to all pairs of $n=6,7,8$
and $T=1,\dots,7$, and the columns correspond to 
$\sigma=0.25,0.5,0.75,1,1.25$. For each $n,\sigma,T$, the
table contains two entries that appear one under the other: 
$\text{prop}_{\text{RaDe2}}(n,\sigma,T)$
(above)
and $\text{prop}_{\text{super}}(n,\sigma,T)$ (below);
see \eqref{eq_exp_4_prop_RaDe2},
\eqref{eq_exp_4_prop_super}. For example, for $n=6$, $T=1$ and $\sigma=0.5$,
these proportions are equal, respectively, to $0.533$ and $0.775$.

In addition, for each $n=6,7,8$ and for $\sigma=0.75$, we measure the CPU time
required to process 
$L=1,000$ messages
by a single iteration of this algorithm.
These CPU times
are listed in the third column of Table~\ref{t:test23_24}.

\begin{table}[htbp]
\begin{center}
\begin{tabular}{c|c|c|c|c|c|c}
\multicolumn{1}{c|}{} & 
\multicolumn{1}{c|}{} & 
\multicolumn{5}{c}{$\sigma$} \\
\hline
$n$ & min\_RaDe2 & 0.25 & 0.5 & 0.75 & 1 & 1.25 \\
 \hline
6 & 1 &  0.969E+00 &  0.752E+00 &  0.533E+00 &  0.429E+00 &  0.302E+00 \\
 & &  0.987E+00 &  0.902E+00 &  0.775E+00 &  0.696E+00 &  0.612E+00 \\
 \hline
6 & 2 &  0.989E+00 &  0.871E+00 &  0.713E+00 &  0.609E+00 &  0.529E+00 \\
 & &  0.997E+00 &  0.955E+00 &  0.870E+00 &  0.804E+00 &  0.735E+00 \\
 \hline
6 & 3 &  0.995E+00 &  0.912E+00 &  0.798E+00 &  0.697E+00 &  0.642E+00 \\
 & &  0.998E+00 &  0.965E+00 &  0.915E+00 &  0.849E+00 &  0.821E+00 \\
 \hline
6 & 4 &  0.996E+00 &  0.938E+00 &  0.841E+00 &  0.771E+00 &  0.709E+00 \\
 & &  0.998E+00 &  0.978E+00 &  0.930E+00 &  0.892E+00 &  0.866E+00 \\
 \hline
6 & 5 &  0.998E+00 &  0.952E+00 &  0.871E+00 &  0.807E+00 &  0.781E+00 \\
 & &  0.100E+01 &  0.984E+00 &  0.948E+00 &  0.926E+00 &  0.874E+00 \\
 \hline
6 & 6 &  0.998E+00 &  0.962E+00 &  0.892E+00 &  0.839E+00 &  0.807E+00 \\
 & &  0.999E+00 &  0.985E+00 &  0.960E+00 &  0.930E+00 &  0.912E+00 \\
 \hline
6 & 7 &  0.998E+00 &  0.968E+00 &  0.909E+00 &  0.860E+00 &  0.834E+00 \\
 & &  0.100E+01 &  0.988E+00 &  0.962E+00 &  0.949E+00 &  0.914E+00 \\
 \hline
 \hline
7 & 1 &  0.838E+00 &  0.536E+00 &  0.349E+00 &  0.262E+00 &  0.202E+00 \\
 & &  0.895E+00 &  0.714E+00 &  0.562E+00 &  0.448E+00 &  0.446E+00 \\
 \hline
7 & 2 &  0.918E+00 &  0.657E+00 &  0.488E+00 &  0.387E+00 &  0.334E+00 \\
 & &  0.948E+00 &  0.804E+00 &  0.673E+00 &  0.578E+00 &  0.510E+00 \\
 \hline
7 & 3 &  0.954E+00 &  0.727E+00 &  0.556E+00 &  0.440E+00 &  0.396E+00 \\
 & &  0.973E+00 &  0.854E+00 &  0.735E+00 &  0.630E+00 &  0.570E+00 \\
 \hline
7 & 4 &  0.963E+00 &  0.784E+00 &  0.598E+00 &  0.494E+00 &  0.415E+00 \\
 & &  0.978E+00 &  0.887E+00 &  0.757E+00 &  0.688E+00 &  0.624E+00 \\
 \hline
7 & 5 &  0.976E+00 &  0.802E+00 &  0.648E+00 &  0.535E+00 &  0.456E+00 \\
 & &  0.985E+00 &  0.896E+00 &  0.785E+00 &  0.690E+00 &  0.655E+00 \\
 \hline
7 & 6 &  0.981E+00 &  0.834E+00 &  0.674E+00 &  0.564E+00 &  0.503E+00 \\
 & &  0.989E+00 &  0.913E+00 &  0.818E+00 &  0.735E+00 &  0.699E+00 \\
 \hline
7 & 7 &  0.982E+00 &  0.859E+00 &  0.705E+00 &  0.619E+00 &  0.516E+00 \\
 & &  0.988E+00 &  0.934E+00 &  0.837E+00 &  0.748E+00 &  0.692E+00 \\
 \hline
 \hline
8 & 1 &  0.764E+00 &  0.370E+00 &  0.229E+00 &  0.158E+00 &  0.155E+00 \\
 & &  0.863E+00 &  0.554E+00 &  0.393E+00 &  0.281E+00 &  0.224E+00 \\
 \hline
8 & 2 &  0.888E+00 &  0.494E+00 &  0.314E+00 &  0.242E+00 &  0.200E+00 \\
 & &  0.941E+00 &  0.699E+00 &  0.506E+00 &  0.427E+00 &  0.347E+00 \\
 \hline
8 & 3 &  0.935E+00 &  0.586E+00 &  0.400E+00 &  0.298E+00 &  0.249E+00 \\
 & &  0.966E+00 &  0.760E+00 &  0.588E+00 &  0.480E+00 &  0.400E+00 \\
 \hline
8 & 4 &  0.958E+00 &  0.652E+00 &  0.420E+00 &  0.334E+00 &  0.290E+00 \\
 & &  0.980E+00 &  0.813E+00 &  0.626E+00 &  0.523E+00 &  0.465E+00 \\
 \hline
8 & 5 &  0.970E+00 &  0.704E+00 &  0.461E+00 &  0.363E+00 &  0.306E+00 \\
 & &  0.982E+00 &  0.841E+00 &  0.665E+00 &  0.568E+00 &  0.527E+00 \\
 \hline
8 & 6 &  0.977E+00 &  0.732E+00 &  0.512E+00 &  0.393E+00 &  0.302E+00 \\
 & &  0.989E+00 &  0.859E+00 &  0.697E+00 &  0.598E+00 &  0.502E+00 \\
 \hline
8 & 7 &  0.977E+00 &  0.759E+00 &  0.545E+00 &  0.417E+00 &  0.339E+00 \\
 & &  0.992E+00 &  0.873E+00 &  0.717E+00 &  0.622E+00 &  0.518E+00 \\
\end{tabular}
\end{center}
\caption{\it Success rate of RaDe2, 
with and without supercharging ($k_1=2n^2$). 
Corresponds to Experiment 4 in Section~\ref{sec_exp_4}.
}
\label{t:test26}
\end{table}

{\bf Observations.}
The following observations can be made from Table~\ref{t:test26}
and some additional experiments by the authors.
These observations are somewhat similar to those 
from Section~\ref{sec_exp_3}.

{\bf 1.} Not surprisingly, for each $n$ and $\sigma$ the performance
of the scheme improves as the number of iterations $T$ increases 
(obviously, this
improvement comes at the cost of additional CPU time).
See, however, Remark~\ref{rem_RaDe2_stuck} below.

{\bf 2.} As expected, for each $n$, $\sigma$ and $T$, the performance
of the scheme with supercharging is better than without supercharging
(again, at the cost of additional CPU time).

{\bf 3.} The effects of supercharging are generally rather negligible
when the proportion of "correct guesses" is already high. For example,
when $n=8$, $\sigma=0.25$ and $T=3$, supercharging improves the success
rate from 93.5\% to 96.6\%. On the other hand, the improvement
might be quite noticeable when the success rate is relatively low.
For example, when $n=8$, $\sigma=0.75$ and $T=2$, supercharging
improves the success rate from 31.4\% to 50.6\%.
 
{\bf 4.} As expected, for each $n$ and $T$ the performance of RaDe2
deteriorates as $\sigma$ increases (noise of a larger
standard deviation makes
the decoding task more difficult).
\begin{remark}
Additional numerical experiments 
seem to indicate that, for sufficiently large noise, 
the success rate
of RaDe2 does not approach 100\% 
as the number of iterations increases to some reasonable value
(e.g. $T = 100$), but seems to be stuck at some
intermediate value (e.g. 70\%). 
In addition, this intermediate value is larger that the related value
of RaDe1
(see Remark~\ref{rem_RaDe1_stuck} in Section~\ref{sec_exp_3}).
This phenomenon is likely to be related
to the fact that RaDe2 is not fully randomized, i.e.
the parameters $j_1,j_2$ from Table~\ref{t:stage_12} 
(see also Section~\ref{sec_RaDe2_informal}) are fixed (and are determined
by $H$). To get rid of this undesirable feature, in the future
we will investigate the possibility of choosing $j_1,j_2$ at random,
probably taking into account the values $S_{n-2}(j_1), S_{n-2}(j_2)$
(see \eqref{eq_skj_2} in Section~\ref{sec_RaDe1_overview}, 
and also Table~\ref{t:stage_12}).
\label{rem_RaDe2_stuck}
\end{remark}
Some additional observations can be made from Table~\ref{t:test23_24}.

{\bf 5.} A single iteration of RaDe2
(see Section~\ref{sec_RaDe2_detailed}) is typically somewhat 
slower than a single iteration of RaDe1
(by a factor between 2.4 and 3.1 for $n=6,7,8$
and $\sigma = 0.75$).
Obviously, this is expected from \eqref{eq_cost_RaDe1}, \eqref{eq_cost_RaDe2}.

{\bf 6.} Since the CPU time required by a 
a nearest neighbors search in $X$ with $k=2 \cdot n^2$ is of the same
order of magnitude as that required by a single
iteration of RaDe2,
it might make sense to perform fewer iterations of the basic
scheme followed by a supercharging rather than
performing more iterations of the basic scheme without supercharging
(see also Table~\ref{t:test26}).

\subsection{Experiment 5}
\label{sec_exp_5}

In this experiment, we compare the performance of the three 
decoding schemes: 
nearest neighbors search in $X$, RaDe1 and RaDe2
(see Sections~\ref{sec_nn_x}, \ref{sec_RaDe1},
\ref{sec_RaDe2}, respectively). Roughly speaking, the purpose
of this experiment is to determine empirically
how long it takes for each of the schemes to achieve
a certain success rate (for several choices of $n$
and noise deviation $\sigma$). In this sense, this experiment
can be viewed simply as a comparison of the results
of Experiments 2,3,4 above. 
For the sake of completeness, however,
we provide a detailed description below.

For each $n=6,7,8$ and each $\sigma = 0.25, 0.75, 1.25$,
we proceed as follows. First, we repeat Experiment 1 from
Section~\ref{sec_exp_1}; thus, for each of the 5,000 
randomly selected transmitted messages $x_{\text{true}}$ in $X$, 
we obtain a received message $y_{\text{obs}}$ in $\Cc^n$ and
a maximum likelihood estimate $x_{\text{best}}$ of $x_{\text{true}}$
(see Section~\ref{sec_exp_1}). As in Experiments 2,3,4 above,
to evaluate the performance of a decoding scheme we
use only those messages for which $x_{\text{best}} = x_{\text{true}}$
(see Remark~\ref{rem_exp_2} in Section~\ref{sec_exp_2}).

Then we choose, more or less arbitrarily, the desired success rate
$0 < p < 1$, and, for each of the schemes from 
Sections~\ref{sec_nn_x}, \ref{sec_RaDe1}, \ref{sec_RaDe2},
we choose the parameters in such a way that the success rate of
the scheme is roughly equal to $p$
(see e.g. \eqref{eq_prop_nn_x} in Section~\ref{sec_exp_2},
\eqref{eq_exp_3_prop_RaDe1}, \eqref{eq_exp_3_prop_super}
in Section~\ref{sec_exp_3},
\eqref{eq_exp_4_prop_RaDe2}, \eqref{eq_exp_4_prop_super}
in Section~\ref{sec_exp_4}). More specifically: 

for
the nearest neighbors search in $X$, we select
the appropriate number of nearest neighbors $k$ (see Section~\ref{sec_nn_x});

for RaDe1, we
select the number of iterations $T$, the number of nearest
neighbors $k_1$ for supercharging
(zero if no supercharging; see also \eqref{eq_exp_3_k1}), 
and use the parameters from \eqref{eq_exp_3_params};

for RaDe2,
select the number of iterations $T$, the number of nearest
neighbors $k_1$ for supercharging
(zero if no supercharging; see also \eqref{eq_exp_3_k1}), 
and use the parameters from \eqref{eq_exp_4_params}.

For each of the schemes from Sections~\ref{sec_nn_x},
\ref{sec_RaDe1}, \ref{sec_RaDe2},
we measure the proportion of correct guesses
($\text{prop}_{\text{nn}}(n,\sigma)$,
 $\text{prop}_{\text{RaDe1}}(n,\sigma)$,
 $\text{prop}_{\text{RaDe2}}(n,\sigma)$,
respectively; see 
\eqref{eq_prop_nn_x}, 
\eqref{eq_exp_3_prop_RaDe1},
\eqref{eq_exp_3_prop_super},
\eqref{eq_exp_4_prop_RaDe2},
\eqref{eq_exp_4_prop_super}). In addition, for each scheme,
we measure the CPU times
required to decode 1,000 messages
(see e.g. Table~\ref{t:test23_24}).

{\bf Tables.} The results of this experiment are displayed
in Tables~\ref{t:summary_6_025} -- \ref{t:summary_8_125}.
Each of this tables corresponds to a certain choice
of $n$ and $\sigma$, and has the following structure.
The columns correspond to the decoding schemes
from Sections~\ref{sec_nn_x}, \ref{sec_RaDe1}, \ref{sec_RaDe2},
respectively
(i.e. nearest neighbors search in $X$, RaDe1 and RaDe2).
 In the first row, we list the scheme's parameters
($k$ for the nearest neighbors search in $X$,
 number of iterations $T$ and supercharging parameter $k_1$
for each of RaDe1 and RaDe2).
The second row contains the proportion of correctly decoded
messages (provided that $x_{\text{true}} = x_{\text{best}}$),
i.e.
$\text{prop}_{\text{nn}}(n,\sigma)$,
 $\text{prop}_{\text{RaDe1}}(n,\sigma)$,
 $\text{prop}_{\text{RaDe2}}(n,\sigma)$,
respectively (by design, we expect these proportions
to be roughly the same across all columns).
The third row contains the CPU time (in seconds) required to decode
1,000 messages; these times are used to compare the performance
of the schemes (the faster the better).

For example, for $n=6$ and $\sigma=0.25$ (see Table~\ref{t:summary_6_025}),
it takes the scheme from Section~\ref{sec_nn_x} (with $k=2n^2+1$)
to achieve the success rate of 91\% in 0.075 seconds per 1,000 messages;
on the other hand, one iteration of RaDe2
(without supercharging) achieves the success rate of 97\% in
0.052 seconds per 1,000 messages.

\begin{table}[htbp]
\begin{center}
\begin{tabular}{c|c|c|c}
 & NN in $X$ & RaDe1 & RaDe2 \\
\hline
parameters &  $k=2n^2+1$  & 4 iterations, $k_1=2n$  & 1 iteration, $k_1=0$ \\
proportion & 0.91E+0  & 0.90E+0          & 0.97E+0        \\
CPU time &   0.75E-1  & 0.20E+0          & 0.52E-1        \\
\end{tabular}
\end{center}
\caption{\it Corresponds to Experiment 5 with $n=6$, $\sigma=0.25$
(see Section~\ref{sec_exp_5}).}
\label{t:summary_6_025}
\end{table}

\begin{table}[htbp]
\begin{center}
\begin{tabular}{c|c|c|c}
 & NN in $X$ & RaDe1 & RaDe2 \\
\hline
parameters &  $k=2n^2+1$  & 2 iterations, $k_1=2n$  & 2 iterations, $k_1=0$ \\
proportion & 0.90E+0  & 0.90E+0          & 0.92E+0        \\
CPU time &   0.14E+0  & 0.11E+0          & 0.10E+0        \\
\end{tabular}
\end{center}
\caption{\it Corresponds to Experiment 5 with $n=7$, $\sigma=0.25$
(see Section~\ref{sec_exp_5}).}
\label{t:summary_7_025}
\end{table}

\begin{table}[htbp]
\begin{center}
\begin{tabular}{c|c|c|c}
 & NN in $X$ & RaDe1 & RaDe2 \\
\hline
parameters &  $k=n^5+1$  & 2 iterations, $k_1=0$  & 2 iterations, $k_1=0$ \\
proportion & 0.82E+0  & 0.85E+0          & 0.88E+0        \\
CPU time &   0.45E+2  & 0.10E+0          & 0.13E+0        \\
\end{tabular}
\end{center}
\caption{\it Corresponds to Experiment 5 with $n=8$, $\sigma=0.25$
(see Section~\ref{sec_exp_5}).}
\label{t:summary_8_025}
\end{table}

\begin{table}[htbp]
\begin{center}
\begin{tabular}{c|c|c|c}
 & NN in $X$ & RaDe1 & RaDe2 \\
\hline
parameters &  $k=n^4$  & 20 iterations, $k_1=2n^2$  & 3 iterations, $k_1=2n^2$ \\
proportion & 0.72E+0  & 0.70E+0          & 0.71E+0        \\
CPU time &   0.13E+1  & 0.10E+1          & 0.27E+0        \\
\end{tabular}
\end{center}
\caption{\it Corresponds to Experiment 5 with $n=6$, $\sigma=0.75$
(see Section~\ref{sec_exp_5}).}
\label{t:summary_6_075}
\end{table}

\begin{table}[htbp]
\begin{center}
\begin{tabular}{c|c|c|c}
 & NN in $X$ & RaDe1 & RaDe2 \\
\hline
parameters &  $k=n^4$  & 9 iterations, $k_1=2n^2$  & 2 iterations, $k_1=2n^2$ \\
proportion & 0.65E+0  & 0.65E+0          & 0.67E+0        \\
CPU time &   0.27E+1  & 0.58E+0          & 0.35E+0        \\
\end{tabular}
\end{center}
\caption{\it Corresponds to Experiment 5 with $n=7$, $\sigma=0.75$
(see Section~\ref{sec_exp_5}).}
\label{t:summary_7_075}
\end{table}

\begin{table}[htbp]
\begin{center}
\begin{tabular}{c|c|c|c}
 & NN in $X$ & RaDe1 & RaDe2 \\
\hline
parameters &  $k=n^5+1$  & 20 iterations, $k_1=2n^2$  & 3 iterations, $k_1=0$ \\
proportion & 0.58E+0  & 0.57E+0          & 0.59E+0        \\
CPU time &   0.45E+2  & 0.13E+1          & 0.61E+0        \\
\end{tabular}
\end{center}
\caption{\it Corresponds to Experiment 5 with $n=8$, $\sigma=0.75$
(see Section~\ref{sec_exp_5}).}
\label{t:summary_8_075}
\end{table}

\begin{table}[htbp]
\begin{center}
\begin{tabular}{c|c|c|c}
 & NN in $X$ & RaDe1 & RaDe2 \\
\hline
parameters &  $k=n^4$  & 25 iterations, $k_1=2n^2$  & 1 iteration, $k_1=2n^2$ \\
proportion & 0.60E+0  & 0.59E+0          & 0.61E+0        \\
CPU time &   0.13E+1  & 0.12E+1          & 0.30E+0        \\
\end{tabular}
\end{center}
\caption{\it Corresponds to Experiment 5 with $n=6$, $\sigma=1.25$
(see Section~\ref{sec_exp_5}).}
\label{t:summary_6_125}
\end{table}

\begin{table}[htbp]
\begin{center}
\begin{tabular}{c|c|c|c}
 & NN in $X$ & RaDe1 & RaDe2 \\
\hline
parameters &  $k=n^5+1$  & 50 iterations, $k_1=2n^2$  & 4 iterations, $k_1=2n^2$ \\
proportion & 0.60E+0  & 0.59E+0          & 0.63E+0        \\
CPU time &   0.19E+2  & 0.26E+1          & 0.88E+0        \\
\end{tabular}
\end{center}
\caption{\it Corresponds to Experiment 5 with $n=7$, $\sigma=1.25$
(see Section~\ref{sec_exp_5}).}
\label{t:summary_7_125}
\end{table}

\begin{table}[htbp]
\begin{center}
\begin{tabular}{c|c|c|c}
 & NN in $X$ & RaDe1 & RaDe2 \\
\hline
parameters &  $k=n^5+1$  & 20 iterations, $k_1=2n^2$  & 3 iterations, $k_1=2n^2$ \\
proportion & 0.36E+0  & 0.36E+0          & 0.38E+0        \\
CPU time &   0.45E+2  & 0.13E+1          & 0.82E+0        \\
\end{tabular}
\end{center}
\caption{\it Corresponds to Experiment 5 with $n=8$, $\sigma=1.25$
(see Section~\ref{sec_exp_5}).}
\label{t:summary_8_125}
\end{table}

{\bf Observations.} The following observations
can be made from Tables~\ref{t:summary_6_025} -- \ref{t:summary_8_125}
and some additional experiments by the authors. In all of these observations,
by "better performance" we mean "takes
less CPU time to achieve a similar success rate". All the times
are in seconds per 1,000 messages.

{\bf 1.} When the noise is relatively small ($\sigma=0.25$),
the performance of RaDe1 is similar to that of  RaDe2
(moreover, even a single iteration of each scheme takes
roughly the same time; compare to Table~\ref{t:test23_24}
that corresponds to $\sigma=0.75$, and see also
\eqref{eq_cost_RaDe1}, \eqref{eq_cost_RaDe2}). For example,
for $n=8$, it takes RaDe1 0.1 seconds to achieve the
success rate of 85\%, while it takes RaDe2 0.13 seconds
to achieve the success rate of 88\%
(see Table~\ref{t:summary_8_025}).

{\bf 2.} Even for $\sigma=0.25$, the performance of the nearest neighbors
search in $X$ strongly depends on $n$ (for similar success rates).
For example, when $n=6$ or $n=7$, this scheme achieves
the success rate of about 90\% in time comparable to that
of RaDe1 and RaDe2 
(see Tables~\ref{t:summary_6_025}, \ref{t:summary_7_025};
 in these, $k=2n^2+1$). On the other hand, when $n=8$, 
the nearest neighbors search in $X$ requires already
$k=n^5$ to achieve the success rate of 82\%; subsequently,
it is slower than both RaDe1 and RaDe2 schemes by a
factor of about 40 (see Table~\ref{t:summary_8_025}).

{\bf 3.} When $\sigma=0.75$, RaDe2 outperforms
RaDe1 somewhat. More specifically, 
it achieves a slightly higher success rate
(71\% vs. 70
$n=8$) faster (by a factor of 3.7 for $n=6$, by a factor
of 1.7 for $n=7$, and by a factor of 2.1 for $n=8$).
In this case, while a single iteration of RaDe2
is slower than a single iteration of RaDe1
(see Table~\ref{t:test23_24}), the latter requires more
iterations to achieve a similar success rate
(see Tables~\ref{t:summary_6_075} -- \ref{t:summary_8_075}).

{\bf 4.} When $\sigma = 0.75$, the nearest neighbors search
in $X$ is noticeably slower than RaDe2
(when the success rates are about 70\%, 65\% and 60\% for
$n=6,7,8$, respectively). More specifically, it is slower
by factors of 4.8, 7.7 and 74 for $n=6,7,8$, respectively
(see Tables~\ref{t:summary_6_075} -- \ref{t:summary_8_075}).

{\bf 5.} When the noise is large ($\sigma = 1.25$), both
RaDe1 and RaDe2 typically perform somewhat better with supercharging
than without it. In addition, in all experiments, RaDe2
outperforms RaDe1 (by factors of about 4,3 and 1.5 with
success rates of about 60\%, 60\% and 37\% for
$n=6,7,8$, respectively).

{\bf 6.} For $\sigma = 1.25$, the algorithm from Section~\ref{sec_nn_x}
requires a relatively large number $k$ of nearest neighbors
to achieve a success rate similar to that of RaDe1 and RaDe2,
in all experiments (see
Tables~\ref{t:summary_6_075} -- \ref{t:summary_8_075}).
Consequently, the nearest neighbors search in $X$ is slower
than, say, RaDe2, by a factor of
4.3, 22 and 55 with success rates about 
 60\%, 60\% and 37\% for
$n=6,7,8$, respectively.

We conclude that the nearest neighbors search in $X$ typically
underperforms compared to both RaDe1 and RaDe2 decoding schemes
(i.e. takes more time to achieve a similar success rate);
moreover, for $n=8$ the difference can be quite noticeable.
On the other hand, RaDe2 typically outperforms RaDe1.

\section{Conclusions and Future Research}
\label{sec_future}

In this paper, we presented several schemes for decoding
of digital messages sent over a noisy multivariate Gaussian channel,
and illustrated their performance via numerical experiments.

Needless to say, there is always a gap between a prototype dealing
with a single mathematical model and the multitude of applications;
we are looking forward to collaborating with researchers and engineers
in both industry and academia to shrink this gap in 
the case of our decoding schemes.

Some future directions of our research in this area are related to
the fact that our schemes need to be extensively tested
in a
demanding industrial environment
on various choices of $C, H, \sigma,$ etc.; this will lead to 
further improvements and modifications. Other developments
are related to the fact that the decoding problem
described above admits certain variations (frequently encountered
in applications).
For example, often the coordinates of the transmitted message $x$
are not independent of each other (e.g. there is a certain
redundancy, as is in case of parity checks etc.). In other words,
the set of possible message is not all of $X$ but rather
a subset of $X$
(the reader is referred, for example, to
\cite{Alamouti}, 
\cite{Erez}, 
\cite{Hochwald}, 
\cite{Jalden}, 
\cite{Koshy}, 
\cite{Koshy2}, 
\cite{Romero}, 
for further information on the subject).
In addition, certain messages might be more likely to be transmitted
than others (based, for instance, on long-term statistics).
Obviously, our schemes will need to be modified
to take all such additional information into account.

\section{Acknowledgments}
\label{sec_ack}
The author would like to thank Linda Ness and Joe Liberty
from Applied Communication Sciences 
for numerous helpful discussions.


\end{document}